\definecolor{Highlight}{HTML}{39b54a}  % green
\Crefname{problem}{Problem}{Problems}
\Crefname{definition}{Def.}{Defs.}
\Crefname{theorem}{Thm.}{Thms.}
\newcommand{\squishlist}{
\begin{list}{{{\small{$\bullet$}}}}
{\setlength{\itemsep}{3pt}      \setlength{\parsep}{1pt}
\setlength{\topsep}{1pt}       \setlength{\partopsep}{0pt}
\setlength{\leftmargin}{1em} \setlength{\labelwidth}{1em}
\setlength{\labelsep}{0.5em} } }
\newcommand{\squishend}{  \end{list}  }
\newcommand{\sysname}{{DiffAttack}\xspace}
\newcommand{\lossname}{{deviated-reconstruction loss}\xspace}
\def\niter{$N_\textrm{iter}$}
\newcommand{\iter}[2]{#1^{(#2)}}
\title{\sysname: Evasion Attacks Against Diffusion-Based Adversarial Purification}
\author{%
  Mintong Kang\\
  UIUC\\
  \texttt{mintong2@illinois.edu} \\
  \And
   Dawn Song\\
   UC Berkeley\\
  \texttt{dawnsong@berkeley.edu} \\
  \And
  Bo Li\\
  UIUC\\
  \texttt{lbo@illinois.edu} \\
  % examples of more authors
  % \And
  % Coauthor \\
  % Affiliation \\
  % Address \\
  % \texttt{email} \\
  % \AND
  % Coauthor \\
  % Affiliation \\
  % Address \\
  % \texttt{email} \\
  % \And
  % Coauthor \\
  % Affiliation \\
  % Address \\
  % \texttt{email} \\
  % \And
  % Coauthor \\
  % Affiliation \\
  % Address \\
  % \texttt{email} \\
}
\begin{document}

\maketitle

\begin{abstract}
Diffusion-based purification defenses leverage diffusion models to remove crafted perturbations of adversarial examples and achieve state-of-the-art robustness.
Recent studies show that even advanced attacks cannot break  such defenses effectively, since the purification process induces an extremely deep computational graph which poses the potential problem of vanishing/exploding gradient, high memory cost, and unbounded randomness.
In this paper, we propose an attack technique \sysname to perform effective and efficient attacks against diffusion-based purification defenses, including both DDPM and score-based approaches. In particular, we propose a \lossname at intermediate diffusion steps to induce inaccurate density gradient estimation to tackle the problem of vanishing/exploding gradients. We also provide a segment-wise forwarding-backwarding algorithm, which leads to memory-efficient gradient backpropagation. 
We validate the attack effectiveness of \sysname compared with existing adaptive attacks on CIFAR-10 and ImageNet.
We show that \sysname decreases the robust accuracy of models compared with SOTA attacks by over 20\% on CIFAR-10 under $\ell_\infty$ attack $(\epsilon=8/255)$, and over 10\% on ImageNet under $\ell_\infty$ attack $(\epsilon=4/255)$. 
We conduct a series of ablations studies, and we find 1) \sysname with the \lossname added over uniformly sampled time steps is more effective than that added over only initial/final steps, and 2) diffusion-based purification with a moderate diffusion length is more robust under \sysname.
% \bo{xxxx}.
\end{abstract}

\section{Introduction}
\label{sec:intro}

Since deep neural networks (DNNs) are found vulnerable to adversarial perturbations \cite{szegedy2013intriguing,goodfellow2014explaining}, improving the robustness of neural networks against such crafted perturbations has become important, especially in safety-critical applications \cite{eykholt2018robust,cao2021invisible,wang2023decodingtrust}.
In recent years, many defenses have been proposed, but they are attacked again by more advanced adaptive attacks \cite{carlini2017towards,madry2018towards,croce2020robustbench,croce2020reliable}.
One recent line of defense (\textit{diffusion-based purification}) leverages diffusion models to purify the input images and achieves the state-of-the-art robustness.
Based on the type of diffusion models the defense utilizes, diffusion-based purification can be categorized into \textit{score-based purification} \cite{nie2022DiffPure} which uses the score-based diffusion model \cite{song2021scorebased} and \textit{DDPM-based purification} \cite{blau2022threat,zhang2022ada3diff,xiao2022densepure,sun2022pointdp,wang2022guided,wu2022guided} which uses the denoising diffusion probabilistic model (DDPM) \cite{ho2020denoising}.
Recent studies show that even the most advanced attacks \cite{croce2020reliable,nie2022DiffPure} cannot break these defenses due to the challenges of vanishing/exploding gradient, high memory cost, and large randomness.
In this paper, we aim to explore the vulnerabilities of such diffusion-based purification defenses, and \textit{design a more effective and efficient adaptive attack against diffusion-based purification}, which will help to better understand the properties of diffusion process and motivate future defenses.

In particular, the diffusion-based purification defenses utilize diffusion models to first diffuse the adversarial examples with Gaussian noises and then perform sampling to remove the noises. In this way, the hope is that the crafted adversarial perturbations can also be removed since the training distribution of diffusion models is clean \cite{song2021scorebased,ho2020denoising}.
The diffusion length (i.e., the total diffusion time steps) is usually large, and at each time step, the deep neural network is used to estimate the gradient of the data distribution.
This results in an extremely deep computational graph that poses great challenges of attacking it: \textit{vanishing/exploding gradients}, \textit{unavailable memory cost}, and \textit{large randomness}.
To tackle these challenges, we propose a \lossname and a segment-wise forwarding-backwarding algorithm and integrate them as an effective and efficient attack technique \textit{\sysname}.

Essentially, our \textbf{\lossname} pushes the reconstructed samples away from the diffused samples at corresponding time steps. It is added at multiple intermediate time steps to relieve the problem of vanishing/exploding gradients.
% The \lossname is formulated in \Cref{eq:loss_inter} and the pictorial illustration is provided in \Cref{fig:pipeline}.
We also theoretically analyze the connection between it and the score-matching loss \cite{hyvarinen2005estimation}, and we prove that maximizing the \lossname induces inaccurate estimation of the density gradient of the data distribution, leading to a higher chance of attacks.
To overcome the problem of large memory cost, we propose a \textbf{segment-wise forwarding-backwarding} algorithm to backpropagate the gradients through a long path.
Concretely, we first do a forward pass and store intermediate samples, and then iteratively simulate the forward pass of a segment and backward the gradient following the chain rule.
Ignoring the memory cost induced by storing samples (small compared with the computational graph), our approach achieves $\mathcal{O}(1)$ memory cost.

Finally, we integrate the \lossname and segment-wise forwarding-backwarding algorithm into \sysname, and empirically validate its effectiveness on CIFAR-10 and ImageNet.
We find that (1) \sysname outperforms existing attack methods \cite{nie2022DiffPure,yoon2021adversarial,pmlr-v80-uesato18a,andriushchenko2020square,athalye2018obfuscated} by a large margin for both the score-based purification and DDPM-based purification defenses, especially under large perturbation radii;
(2) the memory cost of our efficient segment-wise forwarding-backwarding algorithm does not scale up with the diffusion length and saves more than 10x memory cost compared with the baseline \cite{blau2022threat};
(3) a moderate diffusion length benefits the robustness of the diffusion-based purification since longer length will hurt the benign accuracy while shorter length makes it easier to be attacked; (4) attacks with the \lossname added over uniformly sampled time steps outperform that added over only initial/final time steps.
The effectiveness of \sysname and interesting findings will motivate us to better understand and rethink the robustness of diffusion-based purification defenses.

We summarize the main \textit{technical contributions} as follows:

\begin{itemize}
\setlength\itemsep{-0.3em}
    \item[$\bullet$] We propose \sysname, a strong evasion attack against the diffusion-based adversarial purification defenses, including score-based and DDPM-based purification.
    \item[$\bullet$] We propose a \lossname to tackle the problem of vanishing/exploding gradient, and theoretically analyze its connection with data density estimation.
    \item[$\bullet$] We propose a segment-wise forwarding-backwarding algorithm to tackle the high memory cost challenge, and we are the \textit{first} to adaptively attack the DDPM-based purification defense, which is hard to attack due to the high memory cost.
    \item[$\bullet$] We empirically demonstrate that \sysname outperforms existing attacks by a large margin on CIFAR-10 and ImageNet. Particularly, \sysname decreases the model robust accuracy by over $20\%$ for $\ell_\infty$ attack $(\epsilon=8/255)$ on CIFAR-10, and over 10\% on ImageNet under $\ell_\infty$ attack $(\epsilon=4/255)$. 
    \item[$\bullet$] We conduct a series of ablation studies and show that (1) a moderate diffusion length benefits the model robustness, and (2) attacks with the \lossname added over uniformly sampled time steps outperform that added over only initial/final time steps.
\end{itemize}

\section{Preliminary}

% Diffusion models smoothly transform a data distribution $p(\rvx)$ to a prior distribution (e.g., Gaussian distribution) by injecting noises in the \textit{diffusion process} and reconstruct the data distribution by removing noises in the \textit{reverse process}.
There are two types of diffusion-based purification defenses, \textbf{DDPM-based purification}, and \textbf{score-based purification}, which leverage \textit{DDPM} \cite{sohl2015deep,ho2020denoising} and \textit{score-based diffusion model} \cite{song2021scorebased} to purify the adversarial examples, respectively.
Next, we will introduce the basic concepts of DDPM and score-based diffusion models.

Denote the diffusion process indexed by time step $t$ with the \textit{diffusion length} $T$ by $\{\rvx_t\}_{t=0}^T$.
DDPM constructs a discrete Markov chain $\{\rvx_t\}_{t=0}^T$ with discrete time variables $t$ following $p(\rvx_t|\rvx_{t-1})=\mathcal{N}(\rvx_t;\sqrt{1-\beta_t}\rvx_{t-1},\beta_t \mathbf{I})$ where $\beta_t$ is a sequence of positive noise scales (e.g., linear scheduling, cosine scheduling \cite{nichol2021improved}). Considering $\alpha_t := 1-\beta_t$, $\Bar{\alpha}_t := \Pi_{s=1}^t \alpha_s$, and $\sigma_t = \sqrt{\beta_t (1-\Bar{\alpha}_{t-1}) / (1-\Bar{\alpha}_{t}} )$, the reverse process (i.e., sampling process) can be formulated as: 
\begin{equation}
    \label{eq:ancestral}
    \rvx_{t-1} = \dfrac{1}{\sqrt{\alpha_t}}\left(\rvx_t - \dfrac{1-\alpha_t}{\sqrt{1-\Bar{\alpha}_t}}\bm{\epsilon}_\theta(\rvx_t,t)\right) + \sigma_t \rvz
\end{equation}
where $\rvz$ is drawn from $\mathcal{N}(\mathbf{0},\mathbf{I})$. $\bm{\epsilon}_\theta$ parameterized with $\theta$ is the model to approximate the perturbation $\bm{\epsilon}$ in the diffusion process and is trained via the \textit{density gradient loss} $\gL_d$:
\begin{equation}
\footnotesize
    \label{ddpm_train}
    \gL_d = \E_{t,\bm{\epsilon}} \left[\dfrac{\beta_t^2}{2\sigma_t^2\alpha_t(1-\Bar{\alpha}_t)}\| \bm{\epsilon} - \bm{\epsilon}_\theta(\sqrt{\Bar{\alpha}_t}\rvx_0 + \sqrt{1-\Bar{\alpha}_t}\bm{\epsilon},t) \|^2_2 \right]
\end{equation}
where $\bm{\epsilon}$ is drawn from $\mathcal{N}(\bm{0},\mathbf{I})$ and $t$ is uniformly sampled from $[T]:=\{1,2,...,T\}$. 

Score-based diffusion model formulates diffusion models with stochastic differential equations (SDE). 
The diffusion process $\{\rvx_t\}_{t=0}^T$ is indexed by a continuous time variable $t \in [0,1]$. The diffusion process can be formulated as:
\begin{equation}
\label{eq:diffusion}
    d\rvx = f(\rvx,t)dt + g(t)d\mathbf{w}
\end{equation}
where $f(\rvx,t): \sR^n \mapsto \sR^n$ is the drift coefficient characterizing the shift of the distribution, $g(t)$ is the diffusion coefficient controlling the noise scales, and $\mathbf{w}$ is the standard Wiener process. The reverse process is characterized via the reverse time SDE of \Cref{eq:diffusion}:
\begin{equation}
\label{eq:diffusion_rev}
    d\rvx = [f(\rvx,t) - g(t)^2\nabla_\rvx\log{p_t}(\rvx)]dt + g(t)d\mathbf{w}
\end{equation}
where $\nabla_\rvx\log{p_t}(\rvx)$ is the time-dependent score function that can be approximated with neural networks $\rvs_\theta$ parameterized with $\theta$, which is trained via the score matching loss $\gL_s$ \cite{hyvarinen2005estimation,song2020sliced}:
\begin{equation}
\label{eq:score_matching}
\footnotesize
    \gL_s = \E_t \left[\lambda(t) \E_{\rvx_t|\rvx_0} \| \rvs_\theta(\rvx_t,t) - \nabla_{\rvx_t}\log(p(\rvx_t|\rvx_0)) \|_2^2 \right]
\end{equation}
where $\lambda: [0,1] \rightarrow \mathbb{R}$ is a weighting function and $t$ is uniformly sampled over $[0,1]$.

\section{\sysname}

\subsection{Evasion attacks against diffusion-based purification}
A class of defenses leverages generative models for adversarial purification \cite{samangouei2018defense,song2017pixeldefend,shi2021online,yoon2021adversarial}. 
The adversarial images are transformed into latent representations, and then the purified images are sampled starting from the latent space using the generative models.
The process is expected to remove the crafted perturbations since the training distribution of generative models is assumed to be clean.
With diffusion models showing the power of image generation recently \cite{dhariwal2021diffusion,rombach2022high}, diffusion-based adversarial purification has achieved SOTA defense performance \cite{nie2022DiffPure,blau2022threat}. 

We first formulate the problem of evasion attacks against diffusion-based purification defenses. Suppose that the process of diffusion-based purification, including the diffusion and reverse process, is denoted by $P: \mathbb{R}^n \mapsto \mathbb{R}^n$ where $n$ is the dimension of the input $\rvx_0$, and the classifier is denoted by $F: \mathbb{R}^n \mapsto [K]$ where $K$ is the number of classes. 
% Given an input $\rvx_0$, the diffsion-based purification model performs the prediction by outputting $\hat{y}=\arg\max_{i \in [K]} F_i(P(\rvx_0))$. 
Given an input pair $(\rvx_0,y)$, the adversarial example $\Tilde{\rvx}_0$ satisfies:
\begin{equation}
\label{eq:opt}
    \argmax_{i \in [K]} F_i(P(\Tilde{\rvx}_0)) \neq y \quad s.t.~d(\rvx_0,\Tilde{\rvx}_0) \le \delta_{max}
\end{equation}
where $F_i(\cdot)$ is the $i$-th element of the output, $d: \mathbb{R}^n \times \mathbb{R}^n \mapsto \mathbb{R}$ is the distance function in the input space, and $\delta_{max}$ is the perturbation budget.

Since directly searching for the adversarial instance $\Tilde{\rvx}_0$ based on \Cref{eq:opt} is challenging, we often use a surrogate loss $\gL$ to solve an optimization problem:
\begin{equation}
\label{eq:opt_attack}
\max_{\Tilde{\rvx}_0} \gL(F(P(\Tilde{\rvx}_0)),y) \quad s.t.~d(\rvx_0,\Tilde{\rvx}_0) \le \delta_{max}
\end{equation}

where {$P(\cdot)$ is the purification process with DDPM (\Cref{eq:ancestral}) or score-based diffusion (\Cref{eq:diffusion,eq:diffusion_rev})}, and the surrogate loss $\gL$ is often selected as the classification-guided loss, such as CW loss \cite{carlini2017towards}, Cross-Entropy loss and difference of logits ratio (DLR) loss \cite{croce2020reliable}. 
Existing adaptive attack methods such as PGD \cite{madry2018towards} and APGD attack \cite{croce2020reliable} approximately solve the optimization problem in \Cref{eq:opt_attack} via computing the gradients of loss $\gL$ with respect to the decision variable $\Tilde{\rvx}_0$ and iteratively updating $\Tilde{\rvx}_0$ with the gradients. 

However, we observe that the gradient computation for the diffusion-based purification process is challenging for \textit{three} reasons: 1) the long sampling process of the diffusion model induces an extremely deep computational graph which poses the problem of vanishing/exploding gradient \cite{athalye2018obfuscated}, 2) the deep computational graph impedes gradient backpropagation, which requires high memory cost \cite{yoon2021adversarial,blau2022threat}, and 3) the diffusion and sampling process introduces large randomness which makes the calculated gradients unstable and noisy.

To address these challenges,
 we propose a \lossname (in \Cref{sec:loss_int}) and a segment-wise forwarding-backwarding algorithm (in \Cref{sec:mem_eff}) and design an effective algorithm \sysname by integrating them into the attack technique (in \Cref{sec:diffattack}).

\begin{figure*}[t]
    \centering
    \includegraphics[width=\linewidth]{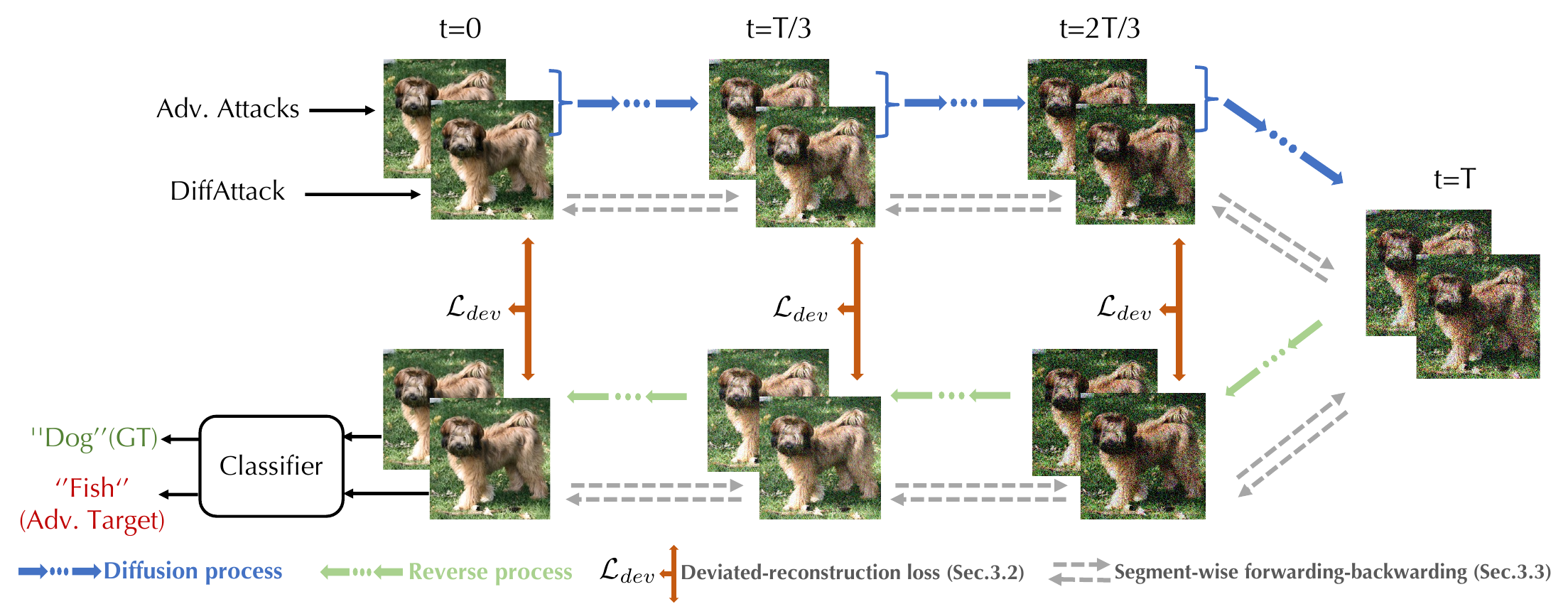}
    
    \caption{ \sysname against diffusion-based adversarial purification defenses. \sysname features the \textit{\lossname} that addresses vanishing/exploding gradients and the \textit{segment-wise forwarding-backwarding algorithm} that leads to memory-efficient gradient backpropagation.}
    \label{fig:pipeline}
\end{figure*}

\subsection{Deviated-reconstruction loss}
\label{sec:loss_int}

In general, the surrogate loss $\gL$ in \Cref{eq:opt_attack} is selected as the classification-guided loss, such as CW loss, Cross-Entropy loss, or DLR loss. 
However, these losses can only be imposed at the classification layer, and induce the problem of vanishing/exploding gradients \cite{athalye2018obfuscated} due to the long diffusion length.
{
Specifically, the diffusion purification process induces an extremely deep graph. For example, DiffPure applies hundreds of iterations of sampling and uses deep UNet with tens of layers as score estimators. Thus, the computational graph consists of thousands of layers, which could cause the problem of gradient vanishing/exploding. Similar gradient problems are also mentioned with generic score-based generative purification (Section 4, 5.1 in \cite{yoon2021adversarial}).
}
Backward path differentiable approximation (BPDA) attack \cite{athalye2018obfuscated} is usually adopted to overcome such problems, but the surrogate model of the complicated sampling process is hard to find, and a simple identity mapping function is demonstrated to be ineffective in the case \cite{nie2022DiffPure,blau2022threat,yoon2021adversarial}.

{
To overcome the problem of exploding/vanishing gradients, we attempt to impose intermediate guidance during the attack.
It is possible to build a set of classifiers on the intermediate samples in the reverse process and use the weighted average of the classification-guided loss at multiple layers as the surrogate loss $\gL$. However, we observe that the intermediate samples are noisy, and thus using classifier $F$ that is trained on clean data cannot provide effective gradients. 
One solution is to train a set of classifiers with different noise scales and apply them to intermediate samples to impose classification-guided loss, but the training is too expensive considering the large diffusion length and variant noise scales at different time steps.
Thus, we propose a \lossname to address the challenge via imposing discrepancy for samples between the diffusion and reverse processes adversarially to provide effective loss at intermediate time steps. 
}

Concretely, since a sequence of samples is generated in the diffusion and reverse processes, effective loss imposed on them would relieve the problem of vanishing/exploding gradient and benefit the optimization.
More formally, let $\rvx_t$, $\rvx'_t$ be the samples at time step $t$ in the diffusion process and the reverse process, respectively. Formally, we maximize the \lossname $\gL_{dev}$ formulated as follows:
\begin{equation}
\label{eq:loss_inter}
    \max \gL_{dev} = \E_t [\alpha(t) \E_{\rvx_t,\rvx'_t|\rvx_0} d(\rvx_t, \rvx'_t)]
\end{equation}
where $\alpha(\cdot)$ is time-dependent weight coefficients and $d(\rvx_t, \rvx'_t)$ is the distance between noisy image $\rvx_t$ in the diffusion process and corresponding sampled image $\rvx'_t$ in the reverse process. 
The expectation over $t$ is approximated by taking the average of results at uniformly sampled time steps in $[0,T]$, and the loss at shallow layers in the computational graph (i.e., large time step $t$) helps relieve the problem of vanishing/exploding gradient. 
The conditional expectation over $\rvx_t,\rvx'_t$ given $\rvx_0$ is approximated by purifying $\rvx_0$ multiple times and taking the average of the loss.

% which functions equivalently with the expectation over transformation (EOT) strategy \cite{athalye2018synthesizing} against the problem of gradient randomness.

Intuitively, the \lossname in \Cref{eq:loss_inter} pushes the reconstructed sample  $\rvx'_t$ in the reverse process away from the sample $\rvx_t$ at the corresponding time step in the diffusion process, and finally induces an inaccurate reconstruction of the clean image.
Letting $q_t(\rvx)$ and $q'_t(\rvx)$ be the distribution of $\rvx_t$ and $\rvx'_t$, we can theoretically prove that the distribution distance between $q_t(\rvx)$ and $q'_t(\rvx)$ positively correlates with the score-matching loss of the score-based diffusion or the density gradient loss of the DDPM.
In other words, maximizing the \lossname in \Cref{eq:loss_inter} induces inaccurate data density estimation, which results in the discrepancy between the sampled distribution and the clean training distribution.

% It means that the intermediate loss pushes the adversarial samples to the low-density region of the training distribution of the score-based diffusion model. In the low-density region, the power of denoising of the diffusion model deteriorates and thus, it would be easier to fool the classifier. 
% More formally, we can show that a large intermediate loss $L_i$ indicates a large score-matching loss in \Cref{eq:score_matching}, which is the training loss of the score-based diffusion models. 
% Since we leverage the log-likelihood of the score function to perform Markov Chain Monte Carlo (MCMC) sampling \cite{parisi1981correlation,grenander1994representations} to denoise the added perturbation, the inaccurate estimation of the score function would make the diffusion model more challenging to estimate the added noises. The pictorial illustration of the intermediate loss is shown in \Cref{fig:pipeline}.

% \bo{add analysis, say, a lemma} 

\begin{theorem}
\label{thm_main}
Consider adversarial sample $\Tilde{\rvx}_0:=\rvx_0+\delta$, where $\rvx_0$ is the clean example and $\delta$ is the perturbation. $p_t(\rvx)$,$p'_t(\rvx)$,$q_t(\rvx)$,$q'_t(\rvx)$ are the distribution of $\rvx_t$,$\rvx'_t$,$\Tilde{\rvx}_t$,$\Tilde{\rvx}'_t$ where $\rvx'_t$ represents the reconstruction of $\rvx_t$ in the reverse process.
$D_{TV}(\cdot,\cdot)$ measures the total variation distance.
Given a VP-SDE parameterized by $\beta(\cdot)$ and the score-based model $\bm{s}_\theta$ with  mild assumptions that $\| \nabla_\rvx\log p_t(\rvx) - \bm{s}_\theta(\rvx,t) \|_2^2 \le L_u$, $D_{TV}(p_t,p'_t) \le \epsilon_{re}$, and a bounded score function by $M$ (specified in \Cref{sec:proof_app_1}), we have:
{\footnotesize
\begin{equation}
\begin{aligned}
\label{eq:thm1}
    D_{TV}(q_t,q'_t) \le  \dfrac{1}{2} \sqrt{ \mathbb{E}_{t,\rvx|\rvx_0} \| \bm{s}_\theta(\rvx,t) - \nabla_\rvx \log q'_t(\rvx) \|^2_2 + C_1} 
    + \sqrt{2-2\exp\{ -C_2 \|\delta \|_2^2 \}} + \epsilon_{re}
\end{aligned}
\end{equation}
$C_1 =  (L_u+8M^2) \int_t^T \beta(t) dt$, $C_2 = (8 (1-\Pi_{s=1}^t (1-\beta_s)))^{-1}$.}
\end{theorem}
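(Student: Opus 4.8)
The plan is to split $D_{TV}(q_t,q'_t)$ by the triangle inequality, routing through the clean diffused law $p_t$ and the clean reconstructed law $p'_t$:
$D_{TV}(q_t,q'_t)\le D_{TV}(q_t,p_t)+D_{TV}(p_t,p'_t)+D_{TV}(p'_t,q'_t)$,
so that the three summands line up with the three terms of \eqref{eq:thm1}. The middle one is immediately the hypothesis $D_{TV}(p_t,p'_t)\le\epsilon_{re}$, so nothing is needed there. Throughout, every distribution is understood conditionally on the clean point $\rvx_0$, which is why the bound carries $\E_{\rvx|\rvx_0}$.

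For $D_{TV}(q_t,p_t)$ I would use that, under the VP-SDE, $\Tilde\rvx_t\mid\rvx_0$ and $\rvx_t\mid\rvx_0$ are isotropic Gaussians with the common covariance $(1-\bar\alpha_t)\mathbf I$, where $\bar\alpha_t=\Pi_{s=1}^t(1-\beta_s)$, and means differing by $\sqrt{\bar\alpha_t}\,\delta$. Plugging the closed form of the squared Hellinger distance between two such Gaussians, $H^2=1-\exp\{-\|\Delta\mu\|_2^2/(8\sigma^2)\}$, into the elementary bound $D_{TV}\le\sqrt{2}\,H$, and using $\bar\alpha_t\le1$ so that $\|\Delta\mu\|_2^2=\bar\alpha_t\|\delta\|_2^2\le\|\delta\|_2^2$ and $\sigma^2=1-\bar\alpha_t$, gives exactly $D_{TV}(q_t,p_t)\le\sqrt{2-2\exp\{-C_2\|\delta\|_2^2\}}$ with $C_2=(8(1-\Pi_{s=1}^t(1-\beta_s)))^{-1}$.

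The crux is $D_{TV}(p'_t,q'_t)$. The route I would take is to compare a reverse process driven by the learned score $\bm{s}_\theta$ against the exact time-reversal of the corresponding adversarial diffusion, realized as time-$t$ marginals of reverse SDEs run over $s\in[t,T]$ from a common terminal law. Girsanov's theorem on this subinterval expresses the relative entropy between the two path measures as $\tfrac12\int_t^T\beta(s)\,\E\|\nabla_\rvx\log q_s-\bm{s}_\theta(\cdot,s)\|_2^2\,ds$; the hypotheses that the true score is bounded by $M$ and that $\|\nabla_\rvx\log p_t-\bm{s}_\theta\|_2^2\le L_u$ are precisely what makes this change of measure admissible and the relative entropy finite. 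The data-processing inequality (descending from path measures to time-$t$ marginals) followed by Pinsker's inequality then produces the prefactor $\tfrac12$ and the square root. It remains to re-express the reference score by inserting $\nabla_\rvx\log q'_s$ (and $\nabla_\rvx\log p_s$), e.g.\ via $\|\nabla_\rvx\log q_s-\bm{s}_\theta\|_2^2\le2\|\nabla_\rvx\log q'_s-\bm{s}_\theta\|_2^2+2\|\nabla_\rvx\log q_s-\nabla_\rvx\log q'_s\|_2^2$, and bounding the residual score differences with the $M$-boundedness and the $L_u$ bound; integrating the resulting constant against $\int_t^T\beta(s)\,ds$ collects it into $C_1=(L_u+8M^2)\int_t^T\beta(s)\,ds$, the remaining $s$-weighting being absorbed into the $\E_{t,\rvx|\rvx_0}$ of the statement.

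I expect the Girsanov step to be the main obstacle: one must verify absolute continuity of the two path measures on the truncated horizon $[t,T]$ (this is where the boundedness assumptions do the work), keep every estimate conditional on $\rvx_0$, and ensure the accumulated constants carry $\int_t^T\beta$ rather than $\int_0^T\beta$. A second, more clerical difficulty is connecting $D_{TV}(p'_t,q'_t)$ --- a comparison of two reverse processes that share the drift $f-\beta\bm{s}_\theta$ but start from different laws --- to the score-estimation error in the stated form, and arranging the triangle inequalities and the score-reference change so that the three heterogeneous error sources, namely the perturbation size $\|\delta\|_2$, the clean-reconstruction error $\epsilon_{re}$, and the score-estimation error, separate cleanly and additively, with the price of replacing $\nabla_\rvx\log q_s$ by $\nabla_\rvx\log q'_s$ surfacing only through the extra $8M^2$ inside $C_1$.
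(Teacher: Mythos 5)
Your decomposition is exactly the paper's: the same triangle inequality through $p_t$ and $p'_t$, the middle term disposed of by the $\epsilon_{re}$ assumption, and the Gaussian term handled identically via $D_{TV}\le\sqrt{2}H$ together with the closed-form Hellinger distance between $\mathcal{N}(\sqrt{\bar\alpha_t}\rvx_0,(1-\bar\alpha_t)\mathbf{I})$ and $\mathcal{N}(\sqrt{\bar\alpha_t}\Tilde{\rvx}_0,(1-\bar\alpha_t)\mathbf{I})$, yielding precisely the $\sqrt{2-2\exp\{-C_2\|\delta\|_2^2\}}$ term. The only genuine divergence is in how you reach the KL bound behind $D_{TV}(p'_t,q'_t)$: you propose Girsanov on the path measures over $[t,T]$ followed by data processing and Pinsker, whereas the paper's Lemma 3 (\Cref{lem}) differentiates $D_{KL}(p'_t,q'_t)$ directly via the Fokker--Planck equation (following DiffPure's Theorem A.1) and integrates, using the $M$-bound on the forward scores to absorb the cross term from the differing drifts $f_{rev}(\cdot,\cdot,p_t)$ vs.\ $f_{rev}(\cdot,\cdot,q_t)$ into $4M^2\int_t^T\beta(s)\,ds$. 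These two routes are standard equivalents here and land on the same $\tfrac12\int_t^T\beta(s)\,\E\|\Delta\text{score}\|_2^2\,ds$ expression, so either works; the Fokker--Planck route avoids having to verify absolute continuity of path measures, which you correctly flag as the delicate point of the Girsanov route. One remark on constants: your insertion of $\bm{s}_\theta$ via $\|a-c\|_2^2\le 2\|a-b\|_2^2+2\|b-c\|_2^2$ is the correct form of that step, and it yields a prefactor of $\tfrac{1}{\sqrt 2}$ (and a doubled $L_u$ contribution) rather than the $\tfrac12$ and $C_1$ stated in the theorem; the paper obtains its constants by applying the same insertion \emph{without} the factors of $2$, an inequality that is not valid pointwise, so your version is the more careful one at the price of slightly weaker constants.
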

\textit{Proof sketch. } We first use the triangular inequality to upper bound $D_{TV}(q_t,q'_t)$ with $D_{TV}(q_t,p_t)+D_{TV}(p_t,p'_t)+D_{TV}(p'_t,q'_t)$. $D_{TV}(q_t,p_t)$ can be upper bounded by a function of the Hellinger distance $H(q_t,p_t)$, which can be calculated explicitly. $D_{TV}(p_t,p'_t)$ can be upper bounded by the reconstruction error $\epsilon_{re}$ by assumption. To upper bound $D_{TV}(p'_t,q'_t)$, we can leverage Pinker’s inequality to alternatively upper bound the KL-divergence between $p'_t$ and $q'_t$ which can be derived by using the Fokker-Planck equation \cite{sarkka2019applied} in the reverse SDE.

\begin{remark}
A large \lossname can indicate a large total variation distance $D_{TV}(q_t,q'_t)$, which is the lower bound of a function with respect to the score-matching loss $\mathbb{E}_{t,\rvx} \| \bm{s}_\theta(\rvx,t) - \nabla_\rvx \log q'_t(\rvx) \|^2_2$ (in RHS of \Cref{eq:thm1}).
Therefore, we show that maximizing the \lossname implicitly maximizes the score-matching loss, and thus induces inaccurate data density estimation to perform an effective attack. The connection of \lossname and the density gradient loss for DDPM is provided in \Cref{thm:ddpm} in \Cref{sec:app_ddpm}.
\end{remark}

\subsection{Segment-wise forwarding-backwarding algorithm}
\label{sec:mem_eff}

Adaptive attacks against diffusion-based purification require gradient backpropagation through the forwarding path. For diffusion-based purification, the memory cost scales linearly with the diffusion length $T$ and is not feasible in a realistic application. Therefore, existing defenses either use a surrogate model for gradient approximation \cite{wang2022guided,wu2022guided,yoon2021adversarial,shi2021online} or consider adaptive attacks only for a small diffusion length \cite{blau2022threat}, but the approximation can induce error and downgrade the attack performance a lot.
Recently, DiffPure \cite{nie2022DiffPure} leverages the adjoint method \cite{li2020scalable} to backpropagate the gradient of SDE within reasonable memory cost and enables adaptive attacks against score-based purification.
However, it cannot be applied to a discrete process, and the memory-efficient gradient backpropagation algorithm is unexplored for DDPM.
{Another line of research \cite{chen2016training,chang2018reversible,gomez2017reversible} proposes the technique of gradient checkpointing to perform gradient backpropagation with memory efficiency. Fewer activations are stored during forwarding passes, and the local computation graph is constructed via recomputation. However, we are the first to apply the memory-efficient backpropagation technique to attack diffusion purification defenses and resolve the problem of memory cost during attacks, which is realized as a challenging problem by prior attacks against purification defenses \cite{nie2022DiffPure,yoon2021adversarial}.
Concretely, we propose a segment-wise forwarding-backwarding algorithm, which leads to memory-efficient gradient computation of the attack loss with respect to the adversarial examples. }

We first feed the input $\rvx_0$ to the diffusion-based purification process and store the intermediate samples $\rvx_1,\rvx_2,...,\rvx_T$ in the diffusion process and $\rvx'_{T}, \rvx'_{T-1},...,\rvx'_{0}$ in the reverse process sequentially. 
For ease of notation, we have $\rvx_{t+1} = f_d(\rvx_t)$ and $\rvx'_{t} = f_r(\rvx'_{t+1})$ for $t \in [0,T-1]$. Then we can backpropagate the gradient iteratively following:
\begin{equation}
    \label{eq:grad_prop}
    \dfrac{\partial \gL}{\partial \rvx'_{t+1}} = \dfrac{\partial \gL}{\partial \rvx'_{t}} \dfrac{\partial \rvx'_t}{\partial \rvx'_{t+1}} = \dfrac{\partial \gL}{\partial \rvx'_{t}} \dfrac{\partial f_r(\rvx'_{t+1})}{\partial \rvx'_{t+1}}
\end{equation}
At each time step $t$ in the reverse process, we only need to store the gradient $\partial \gL / {\partial \rvx'_{t}}$, the intermediate sample $\rvx'_{t+1}$ and the model $f_r$ to construct the computational graph. When we backpropagate the gradients at the next time step $t+1$, the computational graph at time step $t$ will no longer be reused, and thus, we can release the memory of the graph at time step $t$.
% Since we only have one segment of the computational graph used for gradient backpropagation in the memory at each time step, the memory cost does not scale up with the total diffusion steps $T$.
Therefore, we only have \textit{one segment of the computational graph} used for gradient backpropagation in the memory at each time step.
We can similarly backpropagate the gradients in the diffusion process.
Ignoring the memory cost of storing intermediate samples (usually small compared to the memory cost of computational graphs), the memory cost of our segment-wise forwarding-backwarding algorithm is $\mathcal{O}(1)$ (validated in \Cref{fig:mem}).

We summarize the detailed procedures in \Cref{alg:grad_prop} in \Cref{sec:app_alg}.
It can be applied to gradient backpropagation through any discrete Markov process with a long path. 
Basically, we \textit{1) perform the forward pass and store the intermediate samples, 2) allocate the memory of one segment of the computational graph in the memory and simulate the forwarding pass of the segment with intermediate samples, 3) backpropagate the gradients through the segment and release the memory of the segment, and 4) go to step 2 and consider the next segment until termination}.

\subsection{\sysname Technique}

\label{sec:diffattack}
Currently, AutoAttack \cite{croce2020reliable} holds the state-of-the-art attack algorithm, but it fails to attack the diffusion-based purification defenses due to the challenge of \textit{vanishing/exploding gradient}, \textit{memory cost} and \textit{large randomness}. 
To specifically tackle the challenges, we integrate the \lossname (in \Cref{sec:loss_int}) and the segment-wise forwarding-backwarding algorithm (in \Cref{sec:mem_eff}) as an attack technique \textit{\sysname} against diffusion-based purification, including the score-based and DDPM-based purification defenses.
The pictorial illustration of \sysname is provided in \Cref{fig:pipeline}.

Concretely, we maximize the surrogate loss $\gL$ as the optimization objective in \Cref{eq:opt_attack}:
\begin{equation}
\label{eq:diffattack_obj}
    \max \gL = \gL_{cls} + \lambda \gL_{dev}
\end{equation}
where $\gL_{cls}$ is the CE loss or DLR loss, $\gL_{dev}$ is the \lossname formulated in \Cref{eq:loss_inter}, and $\lambda$ is the weight coefficient.
During the optimization, we use the segment-wise forwarding-backwarding algorithm for memory-efficient gradient backpropagation.
{
Note that $\gL_{dev}$ suffers less from the gradient problem compared with $\gL_{cls}$, and thus the objective of $\gL_{dev}$ can be optimized more precisely and stably, but it does not resolve the gradient problem of $\gL_{cls}$. On the other hand, the optimization of $\gL_{dev}$ benefits the optimization of $\gL_{cls}$ in the sense that $\gL_{dev}$ can induce a deviated reconstruction of the image with a larger probability of misclassification. 
$\lambda$ controls the balance of the two objectives. A small $\lambda$ can weaken the deviated-reconstruction object and make the attack suffer more from the vanishing/exploded gradient problem, while a large $\lambda$ can downplay the guidance of the classification loss and confuse the direction towards the decision boundary of the classifier.
}

\textit{Attack against randomized diffusion-based purification.} 
% To reduce the computational cost, existing attack such as AutoAttack only considers attacks against correctly classified samples. However, we observe that the prediction of diffusion-based purification models is of high variance and the misclassified samples are possibly correctly classified the next time. Therefore, we also propose to optimize all the samples to push them away from the decision boundary so that the generated adversarial examples can achieve lower robust accuracy statistically.
{DiffAttack tackles the randomness problem from two perspectives: 1) sampling the diffused and reconstructed samples across different time steps multiple times as in \Cref{eq:loss_inter} (similar to EOT \cite{athalye2018synthesizing}), and 2) optimizing perturbations for all samples including misclassified ones in all steps. Perspective 1) provides a more accurate estimation of gradients against sample variance of the diffusion process. Perspective 2) ensures a more effective and stable attack optimization since the correctness of classification is of high variance over different steps in the diffusion purification setting. Formally, the classification result of a sample can be viewed as a Bernoulli distribution (i.e., correct or false). We should reduce the success rate of the Bernoulli distribution of sample classification by optimizing them with a larger attack loss, which would lead to lower robust accuracy. In other words, one observation of failure in classification does not indicate that the sample has a low success rate statistically, and thus, perspective 2) helps to continue optimizing the perturbations towards a lower success rate (i.e., away from the decision boundary).}
We provide the pseudo-codes of \sysname in \Cref{alg:ourattack} in \Cref{sec:pseudocode}.

% Correpondence between the challenge of attack and our strategy

% 1) Considering the randomness, we optimize all the adversarial examples, which means that we push the adversarial examples away from the decision boundary and avoid the instability around the boundary.

% 2) We use CE loss (instead of classification correctness in AutoAttack) as the surrogate of criterion based on which the algorithm decides whether to update the adversarial examples or not.

\section{Experimental Results}

In this section, we evaluate \sysname from various perspectives empirically.
As a summary, we find that 1) \sysname significantly outperforms other SOTA attack methods against diffusion-based defenses on both the score-based purification and DDPM-based purification models, especially under large perturbation radii (\Cref{sec:exp_score_based} and \Cref{sec:exp_ddim_based});
2) \sysname outperforms other  strong attack methods such as the black-box attack and adaptive attacks against other adversarial purification defenses (\Cref{sec:exp_more_baselines});
3) a moderate diffusion length $T$ benefits the model robustness, since too long/short diffusion length would hurt the robustness (\Cref{sec:exp_different_length});
4) our proposed segment-wise forwarding-backwarding algorithm achieves $\mathcal{O}(1)$-memory cost and outperforms other baselines by a large margin (\Cref{sec:exp_mem});
and 5) attacks with the \lossname added over uniformly sampled time steps outperform that added over only initial/final time steps (\Cref{sec:exp_influ_time}).

\subsection{Experiment Setting}

\textbf{Dataset \& model.} 
We validate \sysname on CIFAR-10 \cite{krizhevsky2009learning} and ImageNet \cite{deng2009imagenet}.
We consider different network architectures for classification. Particularly, WideResNet-28-10 and WideResNet-70-16 \cite{zagoruyko2016wide} are used on CIFAR-10, and ResNet-50 \cite{he2016deep}, WideResNet-50-2 (WRN-50-2), and ViT (DeiT-S) \cite{dosovitskiy2020image} are used on ImageNet. 
We use a pretrained score-based diffusion model \cite{song2021scorebased} and DDPM \cite{ho2020denoising} to purify images following \cite{nie2022DiffPure,blau2022threat}.

\textbf{Evaluation metric.}
The performance of attacks is evaluated using the \textit{robust accuracy} (Rob-Acc), which measures the ratio of correctly classified instances over the total number of test data under certain perturbation constraints.
Following the literature \cite{croce2020reliable}, we consider both $\ell_\infty$ and $\ell_2$ attacks under multiple perturbation constraints $\epsilon$. We also report the clean accuracy (Cl-Acc) for different approaches.

\textbf{Baselines.} To demonstrate the effectiveness of \sysname, we compare it with {1) SOTA attacks against score-based diffusion \textit{adjoint attack} (AdjAttack) \cite{nie2022DiffPure}, 2) SOTA attack against DDPM-based diffusion \textit{Diff-BPDA attack} \cite{blau2022threat}, 3) SOTA black-box attack \textit{SPSA} \cite{pmlr-v80-uesato18a} and \textit{square attack} \cite{andriushchenko2020square}, and 4) specific attack against EBM-based purification \textit{joint attack} \cite{yoon2021adversarial}.
We defer more explanations of baselines and experiment details to \Cref{app:exp_details}. 
The codes are publicly available at \url{https://github.com/kangmintong/DiffAttack}.

\begin{table}[t]
    \centering
    \caption{Attack performance (Rob-Acc (\%)) of \sysname and AdjAttack \cite{nie2022DiffPure} against score-based purification on CIFAR-10.}
    \label{tab:exp_cifar}

    \begin{tabular}{cccccc|cc}
    \toprule
    Models  & T & Cl-Acc & $\ell_p$ Attack & $\epsilon$ &  Method & Rob-Acc & Diff. \\
    \midrule
     \multirow{6}{*}{WideResNet-28-10} & \multirow{4}{*}{0.1}  & \multirow{4}{*}{89.02}  & \multirow{4}{*}{$\ell_\infty$} & \multirow{2}{*}{$8/255$} & AdjAttack & 70.64 & \multirow{2}{*}{\textbf{-23.76}} \\
         & & & & & \sysname & \textbf{46.88} \\
         \cline{5-8}
       &   &  &  & \multirow{2}{*}{$4/255$} & AdjAttack & 82.81  &  \multirow{2}{*}{\textbf{-10.93}} \\
          & & & & & \sysname & \textbf{71.88} \\
     \cline{2-8}
     & \multirow{2}{*}{0.075} & \multirow{2}{*}{91.03}  & \multirow{2}{*}{$\ell_2$} & \multirow{2}{*}{$0.5$} & AdjAttack & 78.58 & \multirow{2}{*}{\textbf{-14.52}} \\
         & & & & & \sysname & \textbf{64.06}\\
    \hline
    \multirow{6}{*}{WideResNet-70-16}   & \multirow{4}{*}{0.1} & \multirow{4}{*}{90.07}  & \multirow{4}{*}{$\ell_\infty$} &  \multirow{2}{*}{$8/255$} & AdjAttack & 71.29 & \multirow{2}{*}{\textbf{-25.98}} \\
         & & & & & \sysname & \textbf{45.31}\\
         \cline{5-8}
    &  &  & &  \multirow{2}{*}{$4/255$} & AdjAttack & 81.25  & \multirow{2}{*}{\textbf{-6.25}}  \\
     & & & & & \sysname & \textbf{75.00} \\
     \cline{2-8}
       & \multirow{2}{*}{0.075} & \multirow{2}{*}{92.68} &  \multirow{2}{*}{$\ell_2$} &  \multirow{2}{*}{$0.5$} & AdjAttack & 80.60  &  \multirow{2}{*}{\textbf{-10.29}}\\
         & & & & & \sysname & \textbf{70.31}\\
    \bottomrule
    \end{tabular}
\end{table}

\newpage
\subsection{Attack against score-based purification}
\label{sec:exp_score_based}
\begin{wraptable}{r}{8cm}
    \centering
    \footnotesize
    \caption{Attack performance of \sysname and AdjAttack \cite{nie2022DiffPure} against score-based adversarial purification with diffusion length $T=0.015$ on ImageNet under $\ell_\infty$ attack ($\epsilon=4/255$).}
    \label{tab:exp_imagenet}
    \begin{tabular}{ ccccc}
    \toprule
    Models   & Cl-Acc & Method & Rob-Acc & Diff. \\
    \midrule
    \multirow{2}{*}{ResNet-50} & \multirow{2}{*}{67.79} & AdjAttack & 40.93  & \multirow{2}{*}{\textbf{-12.80}} \\
    & & \sysname & \textbf{28.13} \\
    \midrule
    \multirow{2}{*}{WRN-50-2} & \multirow{2}{*}{71.16} & AdjAttack & 44.39  & \multirow{2}{*}{\textbf{-13.14}} \\
    & & \sysname & \textbf{31.25} \\
    \midrule
    \multirow{2}{*}{DeiT-S} & \multirow{2}{*}{73.63} & AdjAttack & 43.18  & \multirow{2}{*}{\textbf{-10.37}} \\
    & & \sysname & \textbf{32.81} \\
    \bottomrule
    \end{tabular}
\end{wraptable}
DiffPure \cite{nie2022DiffPure} presents the state-of-the-art adversarial purification performance using the score-based diffusion models \cite{song2021scorebased}. It proposes a strong adaptive attack (AdjAttack) which uses the adjoint method \cite{li2020scalable} to efficiently backpropagate the gradients through reverse SDE.
Therefore, we select AdjAttack as the strong baseline and compare \sysname with it.
The results on CIFAR-10 in \Cref{tab:exp_cifar} show that \sysname achieves much lower robust accuracy compared with AdjAttack under different types of attacks ($\ell_\infty$ and $\ell_2$ attack) with multiple perturbation constraints $\epsilon$. Concretely, \sysname decreases the robust accuracy of models by over $20\%$ under $\ell_\infty$ attack with $\epsilon=8/255$ ($70.64\% \rightarrow 46.88\%$ on WideResNet-28-10 and $71.29\% \rightarrow 45.31\%$ on WideResNet-70-16). The effectiveness of \sysname also generalizes well to large-scale datasets ImageNet as shown in \Cref{tab:exp_imagenet}. 
Note that the robust accuracy of the state-of-the-art non-diffusion-based purification defenses \cite{rebuffi2021fixing,gowal2021improving} achieve about $65\%$ robust accuracy on CIFAR-10 with WideResNet-28-10 under $\ell_\infty=8/255$ attack ($\epsilon=8/255$), while the performance of score-based purification under AdjAttack in the same setting is 70.64\%.
However, given the strong \sysname, the robust accuracy of score-based purification drops to $46.88\%$. 
It motivates us to think of more effective techniques to further improve the robustness of diffusion-based purification in future work.

\subsection{Attack against DDPM-based purification}
\label{sec:exp_ddim_based}

Another line of diffusion-based purification defenses \cite{blau2022threat,wang2022guided,wu2022guided} leverages DDPM \cite{sohl2015deep} to purify the images with intentionally crafted perturbations. 
Since backpropagating the gradients along the diffusion and sampling process with a relatively large diffusion length is unrealistic due to the large memory cost, BPDA attack \cite{athalye2018obfuscated} is adopted as the strong attack against the DDPM-based purification. 
However, with our proposed segment-wise forwarding-backwarding algorithm, we can compute the gradients within a small budget of memory cost, and to our best knowledge, this is the \textit{first} work to achieve adaptive gradient-based adversarial attacks against DDPM-based purification.
We compare \sysname with Diff-BPDA attack \cite{blau2022threat} on CIFAR-10, and the results in \Cref{tab:exp_cifar_ddim} demonstrate that \sysname outperforms the baseline by a large margin under both $\ell_\infty$ and $\ell_2$ attacks. 

\begin{table*}[t]
    \centering
    \caption{Attack performance (Rob-Acc (\%)) of \sysname and Diff-BPDA \cite{blau2022threat} against DDPM-based  purification on CIFAR-10. }
    \label{tab:exp_cifar_ddim}
    \begin{tabular}{cccccc|cc}
    \toprule
    Architecture  & T & Cl-Acc  & $\ell_p$ Attack & $\epsilon$ &  Method & Rob-Acc & Diff. \\
    \midrule
     \multirow{6}{*}{WideResNet-28-10}    & \multirow{4}{*}{100} & \multirow{4}{*}{87.50} & \multirow{4}{*}{$\ell_\infty$}  & \multirow{2}{*}{$8/255$} & Diff-BPDA & 75.00 & \multirow{2}{*}{\textbf{-20.31}} \\
         & & & & & \sysname & \textbf{54.69} \\
         \cline{5-8}
       &   &  &  & \multirow{2}{*}{$4/255$} & Diff-BPDA & 76.56   & \multirow{2}{*}{\textbf{-13.28}}  \\
          & & & & & \sysname & \textbf{63.28}   \\
     \cline{2-8}
       & \multirow{2}{*}{75} & \multirow{2}{*}{90.62} &  \multirow{2}{*}{$\ell_2$} & \multirow{2}{*}{$0.5$} & Diff-BPDA & 76.56  & \multirow{2}{*}{\textbf{-8.59}} \\
         & & & & & \sysname & \textbf{67.97}  \\
    \hline
    \multirow{6}{*}{WideResNet-70-16}    & \multirow{4}{*}{100} & \multirow{4}{*}{91.21} & \multirow{4}{*}{$\ell_\infty$} &  \multirow{2}{*}{$8/255$} & Diff-BPDA & 74.22  & \multirow{2}{*}{\textbf{-14.84}} \\
         & & & & & \sysname & \textbf{59.38}  \\
         \cline{5-8}
    &  &  & &  \multirow{2}{*}{$4/255$} & Diff-BPDA & 75.78   & \multirow{2}{*}{\textbf{-8.59}} \\
     & & & & & \sysname & \textbf{67.19}  \\
     \cline{2-8}
       & \multirow{2}{*}{75} & \multirow{2}{*}{92.19} &  \multirow{2}{*}{$\ell_2$} &  \multirow{2}{*}{$0.5$} & Diff-BPDA &  81.25  &  \multirow{2}{*}{\textbf{-9.37}}\\
         & & & & & \sysname & \textbf{71.88} \\
    \bottomrule
    \end{tabular}
\end{table*}

\subsection{Comparison with other adaptive attack methods}
\label{sec:exp_more_baselines}
\begin{wraptable}{r}{8cm}
    \centering
    \vspace{-1em}
    \caption{Robust accuracy (\%) of \sysname compared with other attack methods on CIFAR-10 with WideResNet-28-10 under $\ell_\infty$ attack $(\epsilon=8/255)$.}
    \label{tab:more_baselines}
    \begin{tabular}{c|cc}
    \toprule
     Method  & Score-based & DDPM-based  \\
     \midrule
     SPSA & 83.37 & 81.29  \\
     Square Attack & 82.81 & 81.68 \\
     Joint Attack (Score) & 72.74 & -- \\
     Joint Attack (Full) & 77.83 & 76.26 \\
     Diff-BPDA & 78.13  & 75.00\\
     AdjAttack & 70.64 & -- \\
     \midrule
     \sysname & \textbf{46.88} & \textbf{54.69}\\
     \bottomrule
    \end{tabular}
    \vspace{-1em}
\end{wraptable}
Besides the AdjAttack and Diff-BPDA attacks against existing diffusion-based purification defenses, we also compare \sysname with  other general types of adaptive attacks: 1) \textbf{black-box attack} SPSA \cite{pmlr-v80-uesato18a} and 2) square attack \cite{andriushchenko2020square}, as well as 3) adaptive attack against score-based generative models \textbf{joint attack} (Score / Full) \cite{yoon2021adversarial}.
SPSA attack approximates the gradients by randomly sampling from a pre-defined distribution and using the finite-difference method. Square attack heuristically searches for adversarial examples in a low-dimensional space with the constraints of perturbation patterns.
% (i.e., constraining the square shape of the perturbation)
Joint attack (score) updates the input by the average of the classifier gradient and the output of the score estimation network, while joint attack (full) leverages the classifier gradients and the difference between the input and the purified samples.
The results in \Cref{tab:more_baselines} show that \textit{\sysname outperforms SPSA, square attack, and joint attack by a large margin on score-based and DDPM-based purification defenses}.
Note that joint attack (score) cannot be applied to the DDPM-based pipeline due to the lack of a score estimator. AdjAttack fails on the DDPM-based pipeline since it can only calculate gradients through SDE.
% Although we use the same EOT parameter, other attack methods fail due to the challenge of estimating the noisy gradients without the complete gradient backpropagation and the intermediate supervision.

% BPDA $x_i + \alpha_i \mathtt{sign}\nabla_x \mathcal{L} \left( g_{\phi}\left( x \right) , y \right)\rvert_{x=f_{\theta}\left(x_i\right)}$
% Joint attack (score)  Adaptive  $x_i +\alpha_i \left(\epsilon\mathtt{sign} (s_{\theta}(x_i)) +(1-\epsilon)\mathtt{sign}(\nabla_x \mathcal{L}(g_{\phi}(x),y)\rvert_{x=x_i}\right)$ 
% Joint attack (full)  Adaptive  $x_i +\alpha_i \left(\epsilon\mathtt{sign} (f_{\theta}(x_i)-x_i) +(1-\epsilon)\mathtt{sign}\nabla_x \mathcal{L}(g_{\phi}(x),y)\rvert_{x=x_i}\right)$
% spsa \cite{pmlr-v80-uesato18a}  Black-box  $x_i + \alpha_i\mathtt{sign} \sum_{j=1}^N \frac{\mathcal{L}\left(\left(\left( g_{\phi}\circ f_{\theta} \right) \left(x+\varepsilon v_j\right), y \right) - \mathcal{L}\left(\left( g_{\phi}\circ f_{\theta} \right) \left(x-\varepsilon v_j\right), y \right)\right)\cdot v_j}{2N\varepsilon}$ 
% black-box attack: square attack \cite{andriushchenko2020square}

\begin{wrapfigure}{r}{8.2cm}
    \centering
    \vspace{-1.5em}
    \subfigure[Score-based purification]{   
    \begin{minipage}{3.5cm}
    \centering    
    \includegraphics[width=1.0\linewidth]{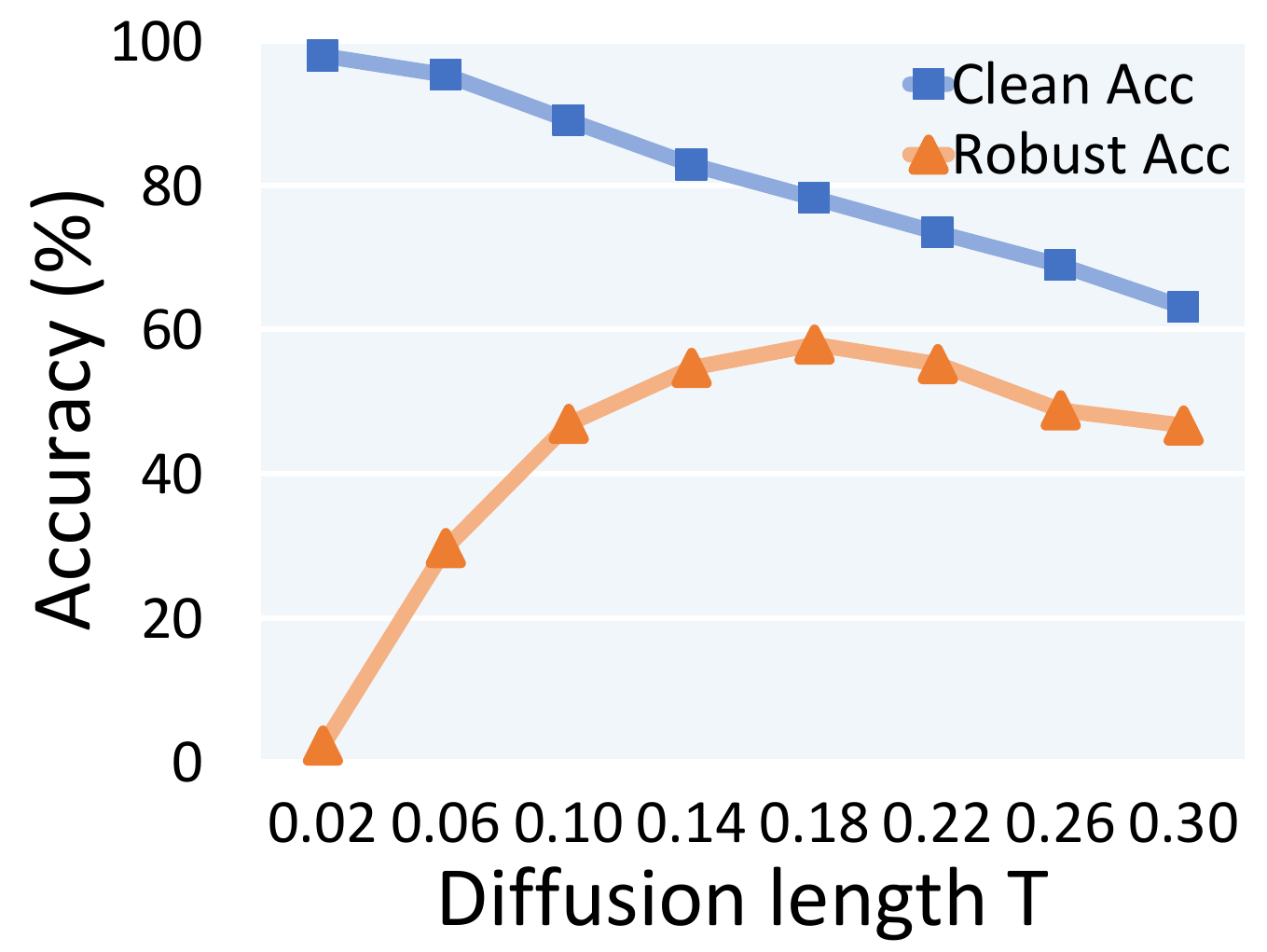} 
    \end{minipage}
    }
    \subfigure[DDPM-based purification]{ 
    \begin{minipage}{3.5cm}
    \centering    
    \includegraphics[width=1.0\linewidth]{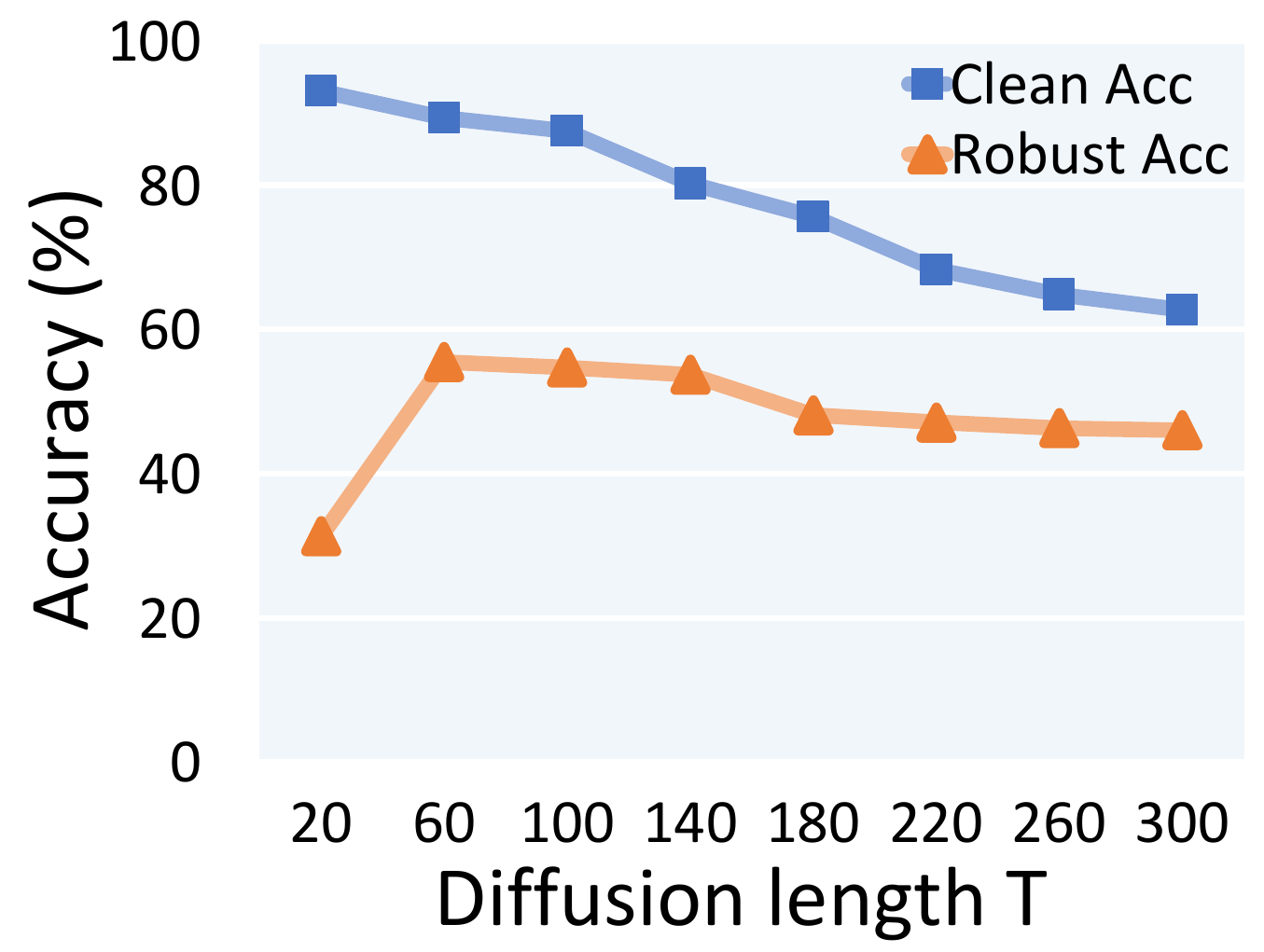}
    \end{minipage}
    }
    \vspace{-0.8em}
    \caption{\small The clean/robust accuracy (\%) of diffusion-based purification with different diffusion length $T$ under \sysname on CIFAR-10 with WideResNet-28-10 under $\ell_\infty$ attack $(\epsilon=8/255)$.}    
    \vspace{-1em}
    \label{fig:diffusion_length} 
\end{wrapfigure}
\subsection{Robustness with different diffusion lengths}
\label{sec:exp_different_length}
We observe that the diffusion length plays an extremely important role in the effectiveness of adversarial purification.
Existing DDPM-based purification works \cite{wu2022guided,wang2022guided} prefer a small diffusion length, but we find it vulnerable under our \sysname.
The influence of the diffusion length $T$ on the performance (clean/robust accuracy) of the purification defense methods is illustrated in \Cref{fig:diffusion_length}.
We observe that
\textit{1) the clean accuracy of the purification defenses negatively correlates with the diffusion lengths} since the longer  diffusion process adds more noise to the input and induces inaccurate reconstruction of the input sample;
and \textit{2) a moderate diffusion length benefits the robust accuracy} since diffusion-based purification with a small length makes it easier to compute the gradients for attacks, while models with a large diffusion length have poor clean accuracy that deteriorates the robust accuracy. We also validate the conclusion on ImageNet in \Cref{app:add}.

\subsection{Comparison of memory cost}
\label{sec:exp_mem}
\begin{wrapfigure}{r}{7.2cm}
\vspace{-1.5em}
    \centering
    \includegraphics[width=\linewidth]{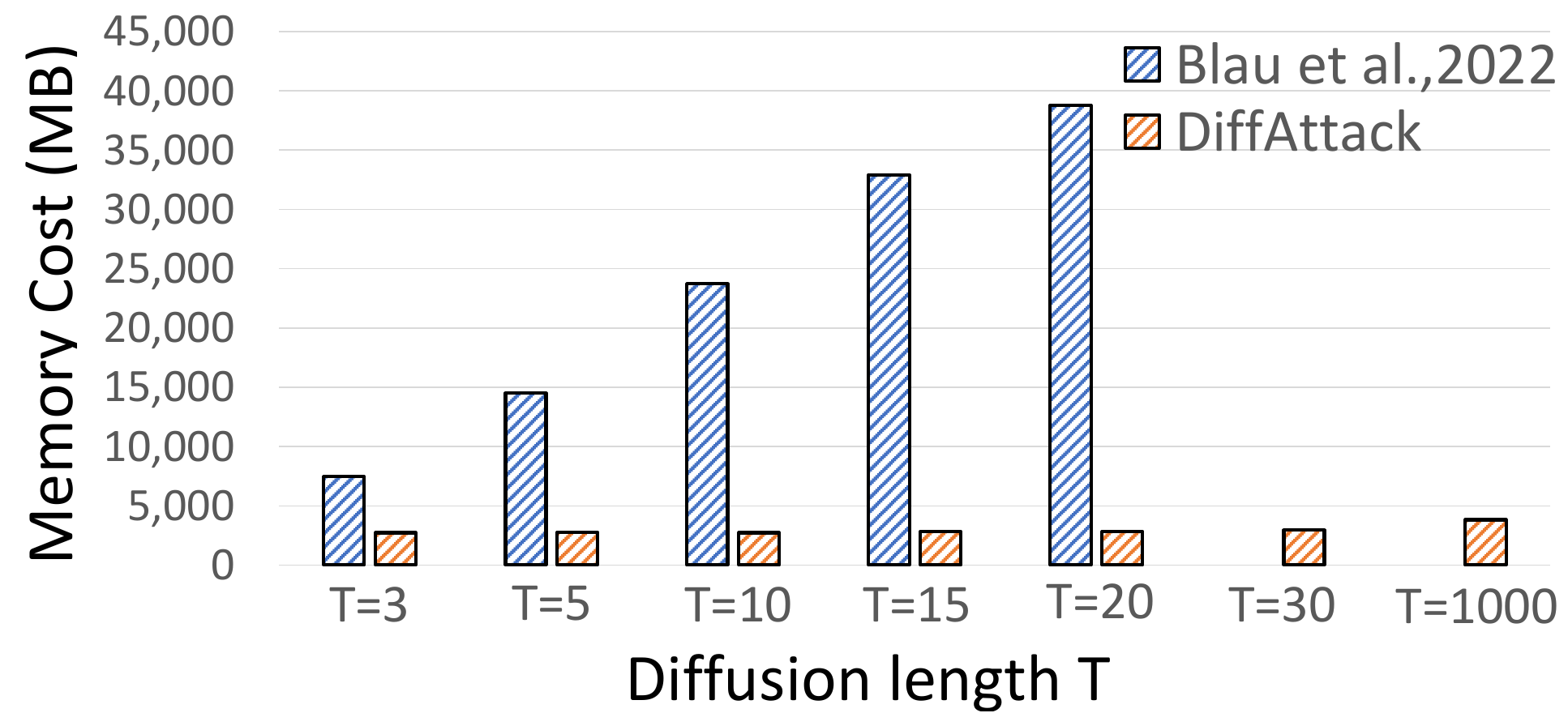}
    % \vspace{-0.5em}
    \caption{\small Comparison of memory cost of gradient backpropagation between \cite{blau2022threat} and \sysname with batch size $16$ on CIFAR-10 with WideResNet-28-10 under $\ell_\infty$ attack.}
    \label{fig:mem}
    \vspace{-1em}
\end{wrapfigure}
Recent work \cite{blau2022threat} computes the gradients of the diffusion and sampling process to perform the gradient-based attack, but it only considers a small diffusion length (e.g., 14 on CIFAR-10).
They construct the computational graph once and for all, which is extremely expensive for memory cost with a large diffusion length.
We use a segment-wise forwarding-backwarding algorithm in \Cref{sec:mem_eff} to avoid allocating the memory for the whole computational graph.
In this part, we validate the memory efficiency of our approach compared to \cite{blau2022threat}.
The results in \Cref{fig:mem} demonstrate that 1) the gradient backpropagation of \cite{blau2022threat} has the memory cost linearly correlated to the diffusion length and does not scale up to the diffusion length of $30$, while 2) \sysname has almost constant memory cost and is able to scale up to extremely large diffusion length ($T=1000$).
The evaluation is done on an RTX A6000 GPU.
In \Cref{app:add}, we provide comparisons of \textit{runtime} between DiffAttack and \cite{blau2022threat} and demonstrate that DiffAttack reduces the memory cost with comparable runtime.

\subsection{Influence of applying the \lossname at different time steps}
\label{sec:exp_influ_time}
% \begin{wraptable}{r}{7cm}
%     \centering
%     \caption{\small The impact of applying \lossname $\gL_{dev}$ at different time steps on the robust accuracy (\%). $T$ is the diffusion length and $Uni(T/3)$ represents uniform sampling $T/3$ time steps over $(0,T)$. }
%     \label{tab:timestep}
%     \small
%     \begin{tabular}{c|ccc}
%     \toprule
%          &  $(0,T/3)$ & $(2/3T,T)$ & $Uni(T/3)$ \\
%          \midrule
%      Score-based & 57.81 & 53.12 & \textbf{46.88} \\
%      DDPM-based & 58.59 & 56.25 & \textbf{54.69} \\
%      \bottomrule
%     \end{tabular}
% \end{wraptable}
\begin{wrapfigure}{r}{5.4cm}
\vspace{-1.5em}
    \centering
    \includegraphics[width=\linewidth]{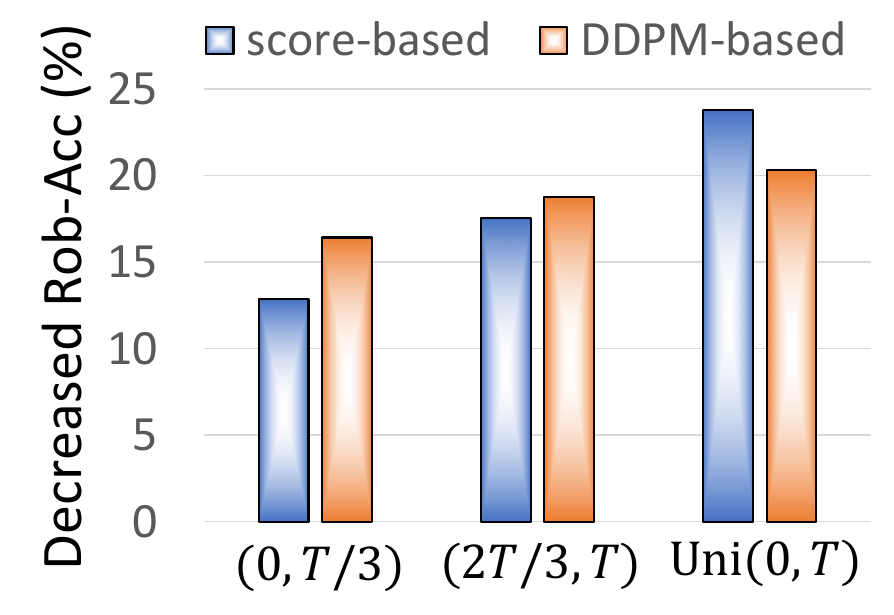}
    \caption{\small The impact of applying $\gL_{dev}$ at different time steps on decreased robust accuracy (\%). $T$ is the diffusion length and $\text{Uni}(0,T)$ represents uniform sampling.}
    \label{fig:abl}
    \vspace{-1em}
\end{wrapfigure}

We also show that the time steps at which we apply the \lossname also influence the effectiveness of \sysname. Intuitively, the loss added at small time steps does not suffer from vanishing/exploding gradients but lacks supervision at consequent time steps, while the loss added at large time steps gains strong supervision but suffers from the gradient problem.
The results in \Cref{fig:abl} show that adding  \lossname to uniformly sampled time steps ($\text{Uni(0,T)}$) achieves the best attack performance and tradeoff compared with that of adding loss to the same number of partial time steps only at the initial stage ($(0,T/3)$) or the final stage ($(2T/3,T)$). For fair comparisons, we uniformly sample $T/3$ time steps (identical to partial stage guidance $(0,T/3)$, $(2T/3,T)$) to impose $\gL_{\text{dev}}$.

% \subsection{Adversarial instances vs. data density}
% show that density of our generated adv is lower than the standard one, say, attack from NV

\section{Related Work}

\textbf{Adversarial purification} methods purify the adversarial input before classification with generative models.
Defense-gan \cite{samangouei2018defense} trains a GAN to restore the clean samples. 
Pixeldefend \cite{song2017pixeldefend} utilizes an autoregressive model to purify adversarial examples.
Another line of research \cite{srinivasan2021robustifying,grathwohl2019your,du2019implicit,hill2020stochastic,yoon2021adversarial} leverages energy-based model (EBM) and Markov chain Monte Carlo (MCMC) to perform the purification.
More recently, diffusion models have seen wide success in image generation \cite{dhariwal2021diffusion,saharia2022palette,saharia2022photorealistic,saharia2022image,meng2021sdedit,rombach2022high}.
They are also used to adversarial purification \cite{nie2022DiffPure,blau2022threat,zhang2022ada3diff,xiao2022densepure,sun2022pointdp,wang2022guided,wu2022guided} and demonstrated to achieve the state-of-the-art robustness.
In this work, we propose \sysname specifically against diffusion-based purification and show the effectiveness in different settings, which motivates future work to improve the robustness of the pipeline.

% \textbf{Diffusion models} have seen wide success in image generation \cite{dhariwal2021diffusion,saharia2022palette,saharia2022photorealistic,saharia2022image,meng2021sdedit,rombach2022high}.
% The framework of diffusion models can be categorized into \textit{denoising diffusion probabilistic model (DDPM)} \cite{sohl2015deep,ho2020denoising,nichol2021improved} and \textit{score-based diffusion} \cite{song2020sliced,song2021maximum,song2021scorebased,tashiro2021csdi}.
% DDPM is a parameterized Markov chain trained using variational inference and sampled with Langevin dynamics.
% Score-based diffusion formulates the diffusion and reverse process with stochastic differential equations (SDE), and a score-based model is trained with the score-matching loss to estimate the score function used for sampling.
% The adversarial purification defenses can build on both categories of diffusion models and we consider both as defense methods to validate the effectiveness of \sysname.

\textbf{Adversarial attacks} search for visually imperceptible signals which can significantly perturb the prediction of models \cite{szegedy2013intriguing,goodfellow2014explaining}.
Different kinds of defense methods are progressively broken by advanced attack techniques, including white-box attack \cite{carlini2017adversarial,athalye2018obfuscated,mosbach2018logit} and black-box attack \cite{andriushchenko2020square,pmlr-v80-uesato18a,papernot2017practical}.
\cite{croce2020robustbench,croce2020reliable,pintor2021indicators,yao2021automated,ling2019deepsec} propose a systematic and automatic framework to attack existing defense methods.
Despite attacking most defense methods, these approaches are shown to be ineffective against the diffusion-based purification pipeline due to the problem of vanishing/exploding gradient, memory cost, and randomness.
Therefore, we propose \sysname to specifically tackle the challenges and successfully attack the diffusion-based purification defenses.

\section{Conclusion}

In this paper, we propose \sysname, including the \lossname added on intermediate samples and a segment-wise forwarding-backwarding algorithm. We empirically demonstrate that \sysname outperforms existing adaptive attacks against diffusion-based purification by a large margin.
We conduct a series of ablation studies and show that a moderate diffusion length benefits the model robustness, and attacks with the \lossname added over uniformly sampled time steps outperform that added over only initial/final time steps, which will help to better understand the properties of diffusion process and motivate future defenses.

\textbf{Acknolwdgement.} This work is partially supported by the National Science Foundation under grant No. 1910100, No. 2046726, No. 2229876, DARPA GARD, the National Aeronautics and Space Administration (NASA) under grant no. 80NSSC20M0229, the Alfred P. Sloan Fellowship, the Amazon research award, and the eBay research award.

\newpage
\bibliographystyle{plain}
\bibliography{ref}

\newpage
\appendix

\section{Broader Impact and Limitations}

\textbf{Broader impact}. As an effective and popular way to explore the vulnerabilities of ML models, adversarial attacks have been widely studied. However, recent diffusion-based purification is shown hard to attack based on different trials, which raises an interesting question of whether it can be attacked. Our paper provides the first effective attack against such defenses to identify the vulnerability of diffusion-based purification for the community and inspire more effective defense approaches. In particular, we propose an effective evasion attack against diffusion-based purification defenses which consists of a deviated-reconstruction loss at intermediate diffusion steps to induce inaccurate density gradient estimation and a segment-wise forwarding-backwarding algorithm to achieve memory-efficient gradient backpropagation. The effectiveness of the deviated-reconstruction loss helps us to better understand the properties of diffusion purification. Concretely, there exist adversarial regions in the intermediate sample space where the score approximation model outputs inaccurate density gradients and finally misleads the prediction. The observation motivates us to design a more robust sampling process in the future, and one potential way is to train a more robust score-based model. Furthermore, the segment-wise forwarding-backwarding algorithm tackles the memory issue of gradient propagation through a long path. It can be applied to the gradient calculation of any discrete Markov process almost within a constant memory cost. To conclude, our attack motivates us to rethink the robustness of a line of SOTA diffusion-based purification defenses and inspire more effective defenses.

\textbf{Limitations}. In this paper, we propose an effective attack algorithm DiffAttack against diffusion-based purification defenses. A possible negative societal impact may be the usage of DiffAttack in safety-critical scenarios such as autonomous driving and medical imaging analysis to mislead the prediction of machine learning models. However, the foundation of DiffAttack and important findings about the diffusion process properties can benefit our understanding of the vulnerabilities of diffusion-based purification defenses and therefore motivate more effective defenses in the future. Concretely, the effectiveness of DiffAttack indicates that there exist adversarial regions in the intermediate sample space where the score approximation model outputs inaccurate density gradients and finally misleads the prediction. The observation motivates us to design a more robust sampling process in the future, and one potential way is to train a more robust score-based model. Furthermore, to control a robust sampling process, it is better to provide guidance across uniformly sampled time steps rather than only at the final stage according to our findings.

\begin{algorithm}[h]
    \caption{Segment-wise forwarding-backwarding algorithm (PyTorch-like pseudo-codes)}\label{alg:grad_prop}
    \begin{algorithmic}[1]
        \STATE {\bfseries Input:} $f_r$, $f_d$, $\partial \gL / \partial \rvx'_0$, $\rvx_i, \rvx'_i~(i \in [T])$
        \STATE {\bfseries Output:} $\partial \gL / \partial \rvx_0$
        %
        % \STATE $\iter{x}{1} \gets P_\mathcal{S}\left(x^{(0)} + \eta \nabla f(x^{(0)})\right)$
        % %
        % \STATE $f_\textrm{max}\gets \max\{f(\iter{x}{0}), f(\iter{x}{1})\}$
        % \STATE $x_\textrm{max} \gets \iter{x}{0}$ \textbf{if} $f_\textrm{max}\equiv f(\iter{x}{0})$ \textbf{else} $x_\textrm{max} \gets \iter{x}{1}$ %\argmax\{f(\iter{x}{0}), f(\iter{x}{1})\}$
        \FOR{$t=1$ {\bfseries to} $T$}
        \STATE $Creat\_Graph(f_r(\rvx'_t) \rightarrow \rvx'_{t-1})$
        \STATE $\gL' \gets \left( \partial \gL / \partial \rvx'_{t-1} \right) \left( f_r(\rvx'_t) \right) $
        \STATE $\partial \gL / \partial \rvx'_t \gets auto\_grad(\gL',\rvx'_t)$
        \STATE $Release\_Graph(f_r(\rvx'_t) \rightarrow \rvx'_{t-1})$
        \ENDFOR
        \STATE $\partial \gL / \partial \rvx_T \gets \partial \gL / \partial \rvx'_T$
        \FOR{$t=T-1$ {\bfseries to} $0$}
        \STATE $Creat\_Graph(f_d(\rvx_t) \rightarrow \rvx_{t+1})$ 
        \STATE $\gL' \gets \left( \partial \gL / \partial \rvx_{t+1} \right) \left( f_d(\rvx_t) \right) $
        \STATE $\partial \gL / \partial \rvx_{t} \gets auto\_grad(\gL',\rvx_{t})$
        \STATE $Release\_Graph(f_d(\rvx_t) \rightarrow \rvx_{t+1})$
        \ENDFOR
    \end{algorithmic}
\end{algorithm}

\section{Efficient Gradient Backpropagation}
\label{sec:app_alg}
In this section, we provide the PyTorch-like pseudo-codes of the segment-wise forwarding-backwarding algorithm. At each time step $t$ in the reverse process, we only need to store the gradient $\partial \gL / {\partial \rvx'_{t}}$, the intermediate sample $\rvx'_{t+1}$ and the model $f_r$ to construct the computational graph. When we backpropagate the gradients at the next time step $t+1$, the computational graph at time step $t$ will no longer be reused, and thus, we can release the memory of the graph at time step $t$.
Therefore, we only have one segment of the computational graph used for gradient backpropagation in the memory at each time step.
We can similarly backpropagate the gradients in the diffusion process.

\section{Proofs}
\label{sec:proof_appendix}

\subsection{Proof of \Cref{thm_main}}
\label{sec:proof_app_1}

\begin{assumption}
\label{assumpt}
Consider adversarial sample $\Tilde{\rvx}_0:=\rvx_0+\delta$, where $\rvx_0$ is the clean example and $\delta$ is the perturbation. $p_t(\rvx)$,$p'_t(\rvx)$,$q_t(\rvx)$,$q'_t(\rvx)$ are the distribution of $\rvx_t$,$\rvx'_t$,$\Tilde{\rvx}_t$,$\Tilde{\rvx}'_t$ where $\rvx'_t$ represents the reconstruction of $\rvx_t$ at time step $t$ in the reverse process.
We consider a score-based diffusion model with a well-trained score-based model $\bm{s}_\theta$ parameterized by $\theta$ with the clean training distribution. Therefore, we assume that $\bm{s}_\theta$ can achieve a low score-matching loss given a clean sample and reconstruct it in the reverse process:
\begin{align}
    \| \nabla_\rvx\log p_t(\rvx) - \bm{s}_\theta(\rvx,t) \|_2^2 \le L_u \\
    D_{TV}(p_t,p'_t) \le \epsilon_{re} 
\end{align}
where $D_{TV}(\cdot,\cdot)$ is the total variation distance. $L_u$ and $\epsilon_{re}$ are two small constants that characterize the score-matching loss and the reconstruction error.
\end{assumption}

\begin{assumption}
    \label{assumpt_2}
    We assume the score function of data distribution is bounded by $M$:
    \begin{equation}
        \| \nabla_\rvx \log p_t(\rvx) \|_2 \le M,~\| \nabla_\rvx \log q_t(\rvx) \|_2 \le M
    \end{equation}
\end{assumption}

\begin{lemma}
\label{lem}
Consider adversarial sample $\Tilde{\rvx}_0:=\rvx_0+\delta$, where $\rvx_0$ is the clean example and $\delta$ is the perturbation.
$p_t(\rvx)$,$p'_t(\rvx)$,$q_t(\rvx)$,$q'_t(\rvx)$ are the distribution of $\rvx_t$,$\rvx'_t$,$\Tilde{\rvx}_t$,$\Tilde{\rvx}'_t$ where $\rvx'_t$ represents the reconstruction of $\rvx_t$ in the reverse process.
Given a VP-SDE parameterized by $\beta(\cdot)$ and the score-based model $\bm{s}_\theta$ with \Cref{assumpt_2}, we have:
\begin{equation}
    D_{KL}(p'_t,q'_t) = \dfrac{1}{2} \int_{t}^T \beta(s) \mathbb{E}_{\rvx|\rvx_0} \| \nabla_\rvx \log p'_s(\rvx) - \nabla_\rvx \log q'_s(\rvx) \|^2_2 ds + 4M^2 \int_{t}^T \beta(s) ds
\end{equation}
\end{lemma}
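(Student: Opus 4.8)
The plan is to track the evolution of the Kullback--Leibler divergence $D_{KL}(p'_s\,\|\,q'_s)$ as the reverse VP-SDE runs and to integrate this evolution from the diffusion endpoint down to time $t$. First I would specialize the reverse-time SDE of \Cref{eq:diffusion_rev} to the VP-SDE, where $f(\rvx,s)=-\tfrac12\beta(s)\rvx$ and $g(s)=\sqrt{\beta(s)}$, with the true score replaced by the learned network $\bm s_\theta$. The marginal densities $p'_s$ and $q'_s$ of the two reconstruction processes then solve the associated Fokker--Planck (Kolmogorov forward) equation; crucially, both are driven by the same $\bm s_\theta$ and the same noise and differ only through their data at $s=T$ (namely $\rvx_T\sim p_T$ versus $\Tilde\rvx_T\sim q_T$), so they satisfy one and the same PDE.

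Next I would differentiate $D_{KL}(p'_s\|q'_s)=\int p'_s\log(p'_s/q'_s)\,d\rvx$ in $s$, substitute the Fokker--Planck equations for $\partial_s p'_s$ and $\partial_s q'_s$, and integrate by parts (the boundary terms vanishing under the usual decay of VP-SDE marginals). This is the standard relative-entropy-dissipation / de Bruijn-type identity: the diffusion part produces a relative Fisher information term $\tfrac12\beta(s)\,\mathbb{E}_{\rvx|\rvx_0}\|\nabla_\rvx\log p'_s-\nabla_\rvx\log q'_s\|_2^2$, while the drift $f-g^2\bm s_\theta$ contributes correction terms. Using \Cref{assumpt_2} --- the uniform bound $M$ on the data score functions --- together with Young's / Cauchy--Schwarz inequalities, I would bound those drift corrections by a multiple of $\beta(s)M^2$; integrating the resulting differential relation over $s\in[t,T]$ produces the integrated Fisher-information term plus the residual $4M^2\int_t^T\beta(s)\,ds$, which is the claimed expression.

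The step I expect to be the main obstacle is the drift mismatch. The reverse dynamics are driven by $\bm s_\theta$, but the quantity that naturally appears in the dissipation identity is $\nabla\log p'_s-\nabla\log q'_s$, the gradient log-ratio of the \emph{true} reconstruction marginals. Showing that the cross terms between $\bm s_\theta$ and these true scores either cancel or collapse into a term that \Cref{assumpt_2} controls --- rather than leaving an uncontrolled contribution --- is the delicate point, and it is precisely where the boundedness of the score (and the explicit Gaussian form of the VP-SDE transition kernel, which lets one control $\nabla\log q'_s$ on the support of $p'_s$) does the work. A secondary but routine matter is justifying the differentiability of $s\mapsto D_{KL}(p'_s\|q'_s)$ and the vanishing of the boundary terms in the integration by parts, both of which follow from standard regularity of VP-SDE densities.
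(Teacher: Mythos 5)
Your overall machinery is the right one --- Fokker--Planck equations for the reverse marginals, a relative-entropy-dissipation identity, integration over $[t,T]$, and an $M$-dependent bound on the residual --- and this is indeed how the paper proceeds. But there is a genuine gap in your setup: you declare that $p'_s$ and $q'_s$ ``satisfy one and the same PDE'' because both reconstructions are driven by the same $\bm{s}_\theta$. If that were literally the case, the dissipation identity would contain \emph{no} drift correction at all: when two densities solve the same forward Kolmogorov equation, the drift contributions cancel identically after integration by parts, leaving only the relative Fisher information term $-\tfrac12\beta(s)\,\mathbb{E}\|\nabla_\rvx\log p'_s-\nabla_\rvx\log q'_s\|_2^2$. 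Your plan to ``bound the drift corrections by a multiple of $\beta(s)M^2$'' then has nothing to bound, and the $4M^2\int_t^T\beta(s)\,ds$ term in the statement cannot be produced. The paper's lemma instead models the two reverse SDEs with \emph{different} drifts,
\begin{equation*}
f_{rev}(\rvx,s,p_s)=-\tfrac12\beta(s)\rvx-\beta(s)\nabla_\rvx\log p_s(\rvx)
\quad\text{vs.}\quad
f_{rev}(\rvx,s,q_s)=-\tfrac12\beta(s)\rvx-\beta(s)\nabla_\rvx\log q_s(\rvx),
\end{equation*}
i.e.\ each reverse process uses the true score of its \emph{own forward} marginal. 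The dissipation identity (the paper invokes Theorem A.1 of DiffPure) then yields the squared Fisher term plus a cross term $\beta(s)[\nabla_\rvx\log p_s-\nabla_\rvx\log q_s]^T[\nabla_\rvx\log p'_s-\nabla_\rvx\log q'_s]$, and it is exactly this cross term that \Cref{assumpt_2} controls via Cauchy--Schwarz, giving the $4M^2\beta(s)$ integrand.

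Relatedly, you locate the ``main obstacle'' in the mismatch between $\bm{s}_\theta$ and the true reconstruction scores $\nabla_\rvx\log p'_s,\nabla_\rvx\log q'_s$. That mismatch plays no role in this lemma: $\bm{s}_\theta$ does not appear in the bound at all, and it only enters later, in the proof of \Cref{thm_main}, where $\|\nabla_\rvx\log p'_s-\nabla_\rvx\log q'_s\|^2_2$ is split by inserting $\bm{s}_\theta$ and the assumption $\|\nabla_\rvx\log p_s-\bm{s}_\theta\|_2^2\le L_u$ absorbs one half. To repair your argument you should (i) write the two reverse SDEs with the $p_s$- and $q_s$-dependent drifts, (ii) carry the resulting drift-difference cross term through the integration by parts, and (iii) bound it by $4M^2\beta(s)$ using \Cref{assumpt_2}. (Two further small points, which are issues with the paper rather than with you: the displayed relation is actually derived as an inequality, not an equality, and integrating from $T$ down to $t$ silently drops the boundary term $D_{KL}(p'_T,q'_T)$.)
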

\begin{proof}
    The reverse process of VP-SDE can be formulated as follows:
    \begin{equation}
    \small
        d\rvx = f_{rev}(\rvx,t,p_t)dt + g_{rev}(t)d\rvw, ~ \text{where} ~~  f_{rev}(\rvx,t,p_t) = -\dfrac{1}{2}\beta(t)\rvx - \beta(t)\nabla_\rvx \log p_t(\rvx), ~ g_{rev}(t)=\sqrt{\beta(t)}
        \label{eq:rev}
    \end{equation}
    Using the Fokker-Planck equation \cite{sarkka2019applied} in \Cref{eq:rev}, we have: 
    \begin{align}
        \dfrac{\partial p'_t(\rvx)}{\partial t} &= - \nabla_\rvx \left( f_{rev}(\rvx,t,p_t)p'_t(\rvx) - \dfrac{1}{2}g^2_{rev}(t)\nabla_\rvx p'_t(x) \right) \\
        &= \nabla_\rvx \left( \left(\dfrac{1}{2}g^2_{rev}(t)\nabla_\rvx \log p'_t(\rvx) - f_{rev}(\rvx,t,p_t) \right)p'_t(\rvx)   \right)
        \label{eq:tmp_p}
    \end{align}

    Similarly, applying the Fokker-Planck equation on the reverse SDE for $q'_t(\rvx)$, we can get:
    \begin{align}
        \dfrac{\partial q'_t(\rvx)}{\partial t}
        = \nabla_\rvx \left( \left(\dfrac{1}{2}g^2_{rev}(t)\nabla_\rvx \log q'_t(\rvx) - f_{rev}(\rvx,t,q_t) \right)q'_t(\rvx)   \right)
        \label{eq:tmp_q}
    \end{align}

    We use the notation $h_p(\rvx) = \dfrac{1}{2}g^2_{rev}(t)\nabla_\rvx \log p'_t(\rvx) - f_{rev}(\rvx,t,p_t)$ and $h_q(x) = \dfrac{1}{2}g^2_{rev}(t)\nabla_\rvx \log q'_t(\rvx) - f_{rev}(\rvx,t,q_t)$.
    Then according to \cite{nie2022DiffPure}(Theorem A.1), under the assumption that $p'_t(\rvx)$ and $q'_t(\rvx)$ are smooth and fast decaying (i.e., $\lim_{\rvx_i \rightarrow \infty} [p'_t(\rvx) \partial \log p'(\rvx) / \partial \rvx_i] = 0, \lim_{\rvx_i \rightarrow \infty} [q'_t(\rvx) \partial \log q'(\rvx) / \partial \rvx_i] = 0$), we have:
    \begin{equation}
        \dfrac{\partial D_{KL}(p'_t,q'_t)}{\partial t} = - \int p'_t(x) [h_p(\rvx,t)-h_q(\rvx,t)]^T [\nabla_\rvx \log p'_t(\rvx) - \nabla_\rvx \log q'_t(\rvx)] d\rvx
    \end{equation}
    Plugging in \Cref{eq:tmp_p,eq:tmp_q}, we have:
    \begin{equation}
        \begin{aligned}
            \dfrac{\partial D_{KL}(p'_t,q'_t)}{\partial t} = - \int & p'_t(\rvx) ( \dfrac{1}{2} g^2_{rev}(t) \| \nabla_\rvx \log p'_t(\rvx) - \nabla_\rvx \log q'_t(\rvx) \|^2_2  \\ 
             & + \beta(t) [\nabla_\rvx \log p_t(\rvx) - \nabla_\rvx \log q_t(\rvx)]^T [\nabla_\rvx \log p'_t(\rvx) - \nabla_\rvx \log q'_t(\rvx)] )   d\rvx \\
        \end{aligned}
    \end{equation}
    Finally, we can derive as follows:
    \begin{align}
        D_{KL}(p'_t,q'_t) =& \int_{t}^T \int_{\mathcal{X}} ( p'_s(\rvx) ( \dfrac{1}{2} g^2_{rev}(s) \| \nabla_\rvx \log p'_s(\rvx) - \nabla_\rvx \log q'_s(\rvx) \|^2_2 \\ &+ \beta(s) [\nabla_\rvx \log p_s(\rvx) - \nabla_\rvx \log q_s(\rvx)]^T[\nabla_\rvx \log p'_s(\rvx) - \nabla_\rvx \log q'_s(\rvx)]  )  ) d\rvx  ds \\
        & \le \int_{t}^T ( \dfrac{1}{2} g^2_{rev}(s) \mathbb{E}_{\rvx|\rvx_0} \| \nabla_\rvx \log p'_s(\rvx) - \nabla_\rvx \log q'_s(\rvx) \|^2_2 + 4\beta(s) M^2 ) ds \\
        &= \dfrac{1}{2} \int_{t}^T \beta(s) \mathbb{E}_{\rvx|\rvx_0} \| \nabla_\rvx \log p'_s(\rvx) - \nabla_\rvx \log q'_s(\rvx) \|^2_2 ds + 4M^2 \int_{t}^T \beta(s) ds
    \end{align}

\end{proof}

\begin{theorem}[\Cref{thm_main} in the main text]
\label{thm_appendix}
Consider adversarial sample $\Tilde{\rvx}_0:=\rvx_0+\delta$, where $\rvx_0$ is the clean example and $\delta$ is the perturbation. $p_t(\rvx)$,$p'_t(\rvx)$,$q_t(\rvx)$,$q'_t(\rvx)$ are the distribution of $\rvx_t$,$\rvx'_t$,$\Tilde{\rvx}_t$,$\Tilde{\rvx}'_t$ where $\rvx'_t$ represents the reconstruction of $\rvx_t$ in the reverse process.
$D_{TV}(\cdot,\cdot)$ measures the total variation distance.
Given a VP-SDE parameterized by $\beta(\cdot)$ and the score-based model $\bm{s}_\theta$ with  mild assumptions that $\| \nabla_\rvx\log p_t(\rvx) - \bm{s}_\theta(\rvx,t) \|_2^2 \le L_u$, $D_{TV}(p_t,p'_t) \le \epsilon_{re}$, and a bounded score function by $M$ (specified with details in \Cref{sec:proof_app_1}), we have:
\begin{equation}
\footnotesize
\begin{aligned}
    D_{TV}(q_t,q'_t) \le \dfrac{1}{2} \sqrt{ \mathbb{E}_{t,\rvx|\rvx_0} \| \bm{s}_\theta(\rvx,t) - \nabla_\rvx \log q'_t(\rvx) \|^2_2 + C_1} \\ 
    + \sqrt{2-2\exp\{ -C_2 \|\delta \|_2^2 \}} + \epsilon_{re}
\end{aligned}
\end{equation}
where $C_1 =  (L_u+8M^2) \int_t^T \beta(t) dt$, $C_2 = \dfrac{1}{8 (1-\Pi_{s=1}^t (1-\beta_s))}$.
\end{theorem}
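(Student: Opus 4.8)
The plan is to decompose $D_{TV}(q_t,q'_t)$ by the triangle inequality into three pieces, $D_{TV}(q_t,q'_t)\le D_{TV}(q_t,p_t)+D_{TV}(p_t,p'_t)+D_{TV}(p'_t,q'_t)$, and bound each term separately. The middle term is immediately controlled by $\epsilon_{re}$ via \Cref{assumpt}. The first term, $D_{TV}(q_t,p_t)$, compares the forward-diffused clean sample with the forward-diffused adversarial sample, so both distributions are explicit Gaussian-smoothed versions of $\rvx_0$ and $\rvx_0+\delta$; I would pass to the Hellinger distance $H(q_t,p_t)$, which for the VP-SDE marginals $\mathcal{N}(\sqrt{\bar\alpha_t}\rvx_0,(1-\bar\alpha_t)\mathbf{I})$ versus $\mathcal{N}(\sqrt{\bar\alpha_t}(\rvx_0+\delta),(1-\bar\alpha_t)\mathbf{I})$ has a closed form, giving $H^2 = 1-\exp\{-\|\sqrt{\bar\alpha_t}\delta\|_2^2/(8(1-\bar\alpha_t))\}$, and then use $D_{TV}\le \sqrt{2}\,H$ (or $D_{TV}\le H\sqrt{2-H^2}$) to land on the $\sqrt{2-2\exp\{-C_2\|\delta\|_2^2\}}$ term with $C_2=(8(1-\Pi_{s=1}^t(1-\beta_s)))^{-1}$.

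For the third term $D_{TV}(p'_t,q'_t)$, I would invoke Pinsker's inequality $D_{TV}(p'_t,q'_t)\le\sqrt{D_{KL}(p'_t,q'_t)/2}$ and then apply \Cref{lem}, which via the Fokker--Planck equation on the reverse SDE gives $D_{KL}(p'_t,q'_t)=\tfrac12\int_t^T\beta(s)\,\mathbb{E}_{\rvx|\rvx_0}\|\nabla_\rvx\log p'_s(\rvx)-\nabla_\rvx\log q'_s(\rvx)\|_2^2\,ds + 4M^2\int_t^T\beta(s)\,ds$. The remaining work is to relate the integrand $\|\nabla_\rvx\log p'_s - \nabla_\rvx\log q'_s\|_2^2$ to the target quantity $\mathbb{E}\|\bm{s}_\theta(\rvx,t)-\nabla_\rvx\log q'_t(\rvx)\|_2^2$: insert $\bm{s}_\theta$ by the triangle/parallelogram inequality, bounding $\|\nabla_\rvx\log p'_s-\bm{s}_\theta\|_2^2$ using the score-matching assumption $L_u$ (the reverse process of a well-trained model tracks the true score, so $p'_s\approx p_s$ up to the $L_u$ term) and the score bound $M$ for the cross terms. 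Collecting the $2\times(\cdot)$ from the parallelogram step converts $L_u\int\beta$ into $2L_u\int\beta$ and $4M^2\int\beta$ into $8M^2\int\beta$, yielding $C_1=(L_u+8M^2)\int_t^T\beta(t)\,dt$ after pulling the $1/2$ in Pinsker and the $1/2$ in \Cref{lem} out as the leading $\tfrac12\sqrt{\cdot}$ coefficient. Finally I would replace the integral over $s$ inside the expectation by $\mathbb{E}_{t,\rvx|\rvx_0}$ (absorbing the time-averaging into the expectation, consistent with how $\gL_{dev}$ and the score-matching loss are written with $\mathbb{E}_t$), which gives exactly the RHS of \Cref{eq:thm1}.

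The main obstacle I anticipate is the third term: rigorously going from $\nabla_\rvx\log p'_s$ (the score of the \emph{model-driven reverse marginal}) to $\bm{s}_\theta$ and then to $\nabla_\rvx\log q'_s$ requires care, because $p'_s$ is not literally the true forward marginal $p_s$ — it is the law of the reverse SDE run with $\bm{s}_\theta$. One must argue that the assumption $\|\nabla_\rvx\log p_t-\bm{s}_\theta\|_2^2\le L_u$ together with $D_{TV}(p_t,p'_t)\le\epsilon_{re}$ suffices to control $\|\nabla_\rvx\log p'_s - \bm{s}_\theta\|$ in the relevant $L^2(p'_s)$ sense, or alternatively to reinterpret the reverse SDE in \Cref{lem} as being driven by $\bm{s}_\theta$ directly so that $h_p$ involves $\bm{s}_\theta$ rather than $\nabla_\rvx\log p_t$ — then the $L_u$ slack enters cleanly as the gap between $\bm{s}_\theta$ and the true score in the drift. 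Handling the Fokker--Planck regularity conditions (smoothness and fast decay of $p'_t,q'_t$, the boundary terms vanishing) is a technical but routine step, borrowed from \cite{nie2022DiffPure}. The Hellinger computation and Pinsker step are standard and should be quick.
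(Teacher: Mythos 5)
Your proposal follows essentially the same route as the paper's proof: the same three-term triangle-inequality decomposition, the closed-form Hellinger bound for $D_{TV}(q_t,p_t)$, the assumption for $D_{TV}(p_t,p'_t)$, and Pinsker's inequality combined with the Fokker--Planck lemma and insertion of $\bm{s}_\theta$ for $D_{TV}(p'_t,q'_t)$, so it is correct up to the same constant-factor bookkeeping the paper itself uses (the paper splits $\|a-b\|_2^2\le\|a-c\|_2^2+\|c-b\|_2^2$ without the factor of $2$ you would get from the parallelogram step, which is where your $2L_u$ versus the stated $L_u$ discrepancy originates). The obstacle you flag --- that the assumption controls $\|\nabla_\rvx\log p_t-\bm{s}_\theta\|$ while the KL bound needs $\|\nabla_\rvx\log p'_s-\bm{s}_\theta\|$ for the model-driven reverse marginal --- is a genuine subtlety that the paper's proof silently elides, so your attention to it is warranted rather than a defect of your approach.
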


\begin{proof}

Since we consider VP-SDE here, we have:
\begin{align}
    f(\rvx,t)=-\dfrac{1}{2}\beta(t)\rvx, \quad g(t)=\sqrt{\beta(t)} \\
    f_{rev}(\rvx,t) = -\dfrac{1}{2}\beta(t)\rvx - \beta(t)\nabla_\rvx \log p_t(\rvx), \quad g_{rev}(t)=\sqrt{\beta(t)}
\end{align}

Using the triangular inequality, the total variation distance between $q_t$ and $q'_t$ can be decomposed as:
\begin{align}
\label{eq:trian}
    D_{TV}(q_t,q'_t) &\le D_{TV}(q_t, p_t) + D_{TV}(p_t,p'_t) + D_{TV}(q'_t,p'_t) 
\end{align}

According to \Cref{assumpt}, we have $D_{TV}(p_t,p'_t) \le \epsilon_{re}$ and thus, we only need to upper bound $D_{TV}(q_t, p_t)$ and $D_{TV}(q'_t,p'_t)$,respectively.

Using the notation $\alpha_t := 1-\beta(t)$ and $\Bar{\alpha}_t := \Pi_{s=1}^t \alpha_s$, we have:
\begin{equation}
    \rvx_t \sim p_t := \mathcal{N}(\rvx_t; \sqrt{\Bar{\alpha}_t}\rvx_0, (1-\Bar{\alpha}_t)\mathbf{I}), \quad \Tilde{\rvx}_t \sim q_t:=\mathcal{N}(\Tilde{\rvx}_t; \sqrt{\Bar{\alpha}_t}\Tilde{\rvx}_0, (1-\Bar{\alpha}_t)\mathbf{I})
\end{equation}

Therefore, we can upper bound the total variation distance between $q_t$ and $p_t$ as follows:
\begin{align}
    D_{TV}(q_t, p_t) &\mathop{\le}^{(a)} \sqrt{2} H(\rvx_t, \Tilde{\rvx}_t) \label{eq:(a)} \\
    &\mathop{=}^{(b)} \sqrt{2} \sqrt{1-\exp\{ -\dfrac{1}{8 (1-\Bar{\alpha}_t)} \delta^T \delta \}} \label{eq:(b)}\\
    &\mathop{=} \sqrt{2-2\exp\{ -\dfrac{1}{8 (1-\Bar{\alpha}_t)} \|\delta \|_2^2 \}} \label{eq:hel}
\end{align}

where we leverage the inequality between the Hellinger distance $H(\cdot,\cdot)$ and total variation distance in \Cref{eq:(a)}(a) and we plug in the closed form of Hellinger distance between two Gaussian distribution \cite{devroye2018total} parameterized by $\mu_1,\Sigma_1,\mu_2,\Sigma_2$ in \Cref{eq:(b)}(b):
\begin{equation}
\small
    H(\mathcal{N}(\mu_1,\Sigma_1),\mathcal{N}(\mu_2,\Sigma_2))^2 = 1 - \dfrac{det(\Sigma_1)^{1/4} det(\Sigma_2)^{1/4}}{det\left(\dfrac{\Sigma_1+\Sigma_2}{2}\right)^{1/2}} \exp\{-\dfrac{1}{8}(\mu_1-\mu_2)^T \left( \dfrac{\Sigma_1 + \Sigma_2}{2}\right)^{-1} (\mu_1-\mu_2) \}
\end{equation}

Then, we will upper bound $D_{TV}(p'_t,q'_t)$. We first leverage Pinker's inequality to upper bound the total variation distance with the KL-divergence:
\begin{equation}
    D_{TV}(p'_t,q'_t) \le \sqrt{\dfrac{1}{2}D_{KL}(p'_t,q'_t)}
\end{equation}

Then we plug in the results in \Cref{lem} to upper bound $KL(p'_t,q'_t)$ and it follows that:
\begin{align}
\small
    & D_{TV}(p'_t,q'_t) \\ \le& \sqrt{\dfrac{1}{2}D_{KL}(p'_t,q'_t)} \\
    \le&  \sqrt{ \dfrac{1}{4} \int_{t}^T \beta(s) \mathbb{E}_{\rvx|\rvx_0} \| \nabla_\rvx \log p'_s(\rvx) - \nabla_\rvx \log q'_s(\rvx) \|^2_2 ds + 2M^2 \int_{t}^T \beta(s) ds } \\
    \le& \sqrt{\dfrac{1}{4} \hspace{-0.3em} \int_t^T \hspace{-0.5em} \beta(s) \mathbb{E}_{\rvx|\rvx_0}[\| \nabla_\rvx \log p'_s(\rvx) - \bm{s}_\theta(\rvx,s) \|^2_2 + \| \bm{s}_\theta(\rvx,s) - \nabla_\rvx \log q'_s(\rvx) \|^2_2] ds +  2M^2 \hspace{-0.5em} \int_{t}^T \hspace{-0.5em} \beta(s) ds } \\
    \mathop{\le}^{(a)} &  \sqrt{ (\dfrac{L_u}{4}+2M^2) \int_t^T \beta(s) ds + \dfrac{1}{4}\mathbb{E}_{t,\rvx|\rvx_0} \| \bm{s}_\theta(\rvx,t) - \nabla_\rvx \log q'_t(\rvx) \|^2_2}
    \label{eq:dif}
\end{align}
where in \Cref{eq:dif}(a), we leverage the fact that $\beta(\cdot)$ is bounded in $[0,1]$.

Combining \Cref{eq:trian,eq:hel,eq:dif}, we can finally get:
\begin{align}
    D_{TV}(q_t,q'_t) \le  \sqrt{ \dfrac{1}{4} \mathbb{E}_{t,\rvx|\rvx_0} \| \bm{s}_\theta(\rvx,t) - \nabla_\rvx \log q'_t(\rvx) \|^2_2 + C_1} + \sqrt{2-2\exp\{ -C_2 \|\delta \|_2^2 \}} + \epsilon_{re}
\end{align}
where $C_1 =  (\dfrac{L_u}{4}+2M^2) \int_t^T \beta(s) ds$ and $C_2 = \dfrac{1}{8 (1-\Pi_{s=1}^t (1-\beta_s))}$.
\end{proof}

\subsection{Connection between the \lossname and the density gradient loss for DDPM}
\label{sec:app_ddpm}
\begin{theorem}
\label{thm:ddpm}
Consider adversarial sample $\Tilde{\rvx}_0:=\rvx_0+\delta$, where $\rvx_0$ is the clean example and $\delta$ is the perturbation. $p_t(\rvx)$,$p'_t(\rvx)$,$q_t(\rvx)$,$q'_t(\rvx)$ are the distribution of $\rvx_t$,$\rvx'_t$,$\Tilde{\rvx}_t$,$\Tilde{\rvx}'_t$ where $\rvx'_t$ represents the reconstruction of $\rvx_t$ in the reverse process.
Given a DDPM parameterized by $\beta(\cdot)$ and the function approximator $\bm{s}_\theta$ with the mild assumptions that $\| \bm{s}_\theta(\rvx,t) - \bm{\epsilon}(\rvx_t,t) \|_2^2 \le L_u$, $D_{TV}(p_t,p'_t) \le \epsilon_{re}$, and a bounded score function by $M$ (i.e., $\| \bm{\epsilon}(\rvx,t) \|_2 \le M$) where $\bm{\epsilon(\cdot,\cdot)}$ represents the mapping function of the true perturbation, we have:
    \begin{equation}
    \scriptsize
        \begin{aligned}
                D_{TV}(q_t,q'_t) \le & \sqrt{2 - 2\exp\{-C_2 \left( \sum_{k=t+1}^{T} \hspace{-0.5em} \lambda(k,t) 
\|\bm{s}_\theta(\Tilde{\rvx}'_k,k)-\epsilon(\Tilde{\rvx}'_k,k)\|_2 + C_1 \| \delta \|_2 + ( \sqrt{L_u} + 2M ) \hspace{-0.5em} \sum_{k=t+1}^{T} \hspace{-0.5em} \lambda(k,t) \hspace{-0.4em}  \right)^2  \hspace{-0.6em} \}} \\
    &+ \sqrt{2-2\exp\{ -\dfrac{1}{8} (1- \Pi_{s=1}^t (1-\beta_s))\|\delta \|_2^2 \}} + \epsilon_{re}
        \end{aligned}
    \end{equation}
where $C_1 = \left( \Pi_{s=t+1}^{T} \sqrt{\Pi_{k=1}^s (1-\beta_k)} \right) \sqrt{\Pi_{s=1}^T (1-\beta_s)}$, $C_2=\dfrac{1-\Pi_{s=1}^t (1-\beta_s)}{8(1-\Pi_{s=1}^{t-1} (1-\beta_s))\beta_t}$, and $\lambda(k,t)=\dfrac{\beta_k\Pi_{i=t+1}^{k-1} \sqrt{\Pi_{s=1}^i(1-\beta_s)}}{\sqrt{1-\Pi_{s=1}^k(1-\beta_s)} }$.
\end{theorem}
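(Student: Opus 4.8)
\textit{Proof proposal.} The plan is to keep the three–term skeleton of the proof of \Cref{thm_main}, but replace the continuous Fokker--Planck computation of \Cref{lem}, which is not available for a discrete Markov chain, by a direct analysis of the Gaussian reverse transitions of the DDPM. First I would apply the triangle inequality $D_{TV}(q_t,q'_t)\le D_{TV}(q_t,p_t)+D_{TV}(p_t,p'_t)+D_{TV}(p'_t,q'_t)$. The middle term is at most $\epsilon_{re}$ by assumption, which produces the $+\epsilon_{re}$ summand. For $D_{TV}(q_t,p_t)$ one uses that $p_t=\mathcal{N}(\sqrt{\Bar{\alpha}_t}\rvx_0,(1-\Bar{\alpha}_t)\mathbf{I})$ and $q_t=\mathcal{N}(\sqrt{\Bar{\alpha}_t}\Tilde{\rvx}_0,(1-\Bar{\alpha}_t)\mathbf{I})$, so exactly as in the proof of \Cref{thm_main} the bound $D_{TV}\le\sqrt2\,H$ together with the closed form of the Hellinger distance between two Gaussians of equal covariance gives the summand $\sqrt{2-2\exp\{-\tfrac18(1-\Pi_{s=1}^t(1-\beta_s))\|\delta\|_2^2\}}$.

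All the work is in $D_{TV}(p'_t,q'_t)$. I would couple the reverse chains $\{\rvx'_k\}_{k\ge t}$ and $\{\Tilde{\rvx}'_k\}_{k\ge t}$ by using the same injected noise at every reverse step, and couple their forward chains by shared Gaussian noise, so that the top displacement is $\rvx'_T-\Tilde{\rvx}'_T=\rvx_T-\Tilde{\rvx}_T=-\sqrt{\Bar{\alpha}_T}\,\delta$. I then track the $\ell_2$ displacement between corresponding samples through the recursion generated by the deterministic part of the reverse update $\mu_k(\rvx)=\tfrac1{\sqrt{\alpha_k}}\bigl(\rvx-\tfrac{1-\alpha_k}{\sqrt{1-\Bar{\alpha}_k}}\bm{s}_\theta(\rvx,k)\bigr)$. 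Since no Lipschitz control of $\bm{s}_\theta$ is available, the key device is to route each difference of network outputs through the true-noise map $\bm{\epsilon}(\cdot,k)$, which is affine in $\rvx$: I would decompose $\bm{s}_\theta(\rvx'_k,k)-\bm{s}_\theta(\Tilde{\rvx}'_k,k)$ into $\bigl(\bm{\epsilon}(\rvx'_k,k)-\bm{\epsilon}(\Tilde{\rvx}'_k,k)\bigr)$ plus two residuals, charging one residual to $\sqrt{L_u}$ and the other to $\|\bm{s}_\theta(\Tilde{\rvx}'_k,k)-\bm{\epsilon}(\Tilde{\rvx}'_k,k)\|_2$, and using $\|\bm{\epsilon}(\cdot,k)\|_2\le M$ for the crude residuals. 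Unrolling this linear recursion from $k=T$ down to $k=t+1$, the product of the per-step affine gains collapses to exactly the stated weights $\lambda(k,t)$ on the step-$k$ score error, while the gain applied to the top displacement $\sqrt{\Bar{\alpha}_T}\|\delta\|_2$ yields the constant $C_1$; this bounds the gap between the means of the last reverse transitions of the two chains by $\sum_{k=t+1}^T\lambda(k,t)\|\bm{s}_\theta(\Tilde{\rvx}'_k,k)-\bm{\epsilon}(\Tilde{\rvx}'_k,k)\|_2+C_1\|\delta\|_2+(\sqrt{L_u}+2M)\sum_{k=t+1}^T\lambda(k,t)$.

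To turn this into a total-variation bound, I would condition on the coupled pair $(\rvx'_{t+1},\Tilde{\rvx}'_{t+1})$: given them, $\rvx'_t$ and $\Tilde{\rvx}'_t$ are isotropic Gaussians with common covariance $\sigma_t^2\mathbf{I}$, $\sigma_t^2=\beta_t(1-\Bar{\alpha}_{t-1})/(1-\Bar{\alpha}_t)$, whose means differ by at most the quantity just displayed; applying $D_{TV}\le\sqrt2\,H$ and the Gaussian Hellinger formula (now with variance $\sigma_t^2$, hence the factor $C_2=1/(8\sigma_t^2)=\tfrac{1-\Pi_{s=1}^t(1-\beta_s)}{8(1-\Pi_{s=1}^{t-1}(1-\beta_s))\beta_t}$), then taking the outer expectation and using joint convexity of the squared Hellinger distance, gives $D_{TV}(p'_t,q'_t)\le\sqrt{2-2\exp\{-C_2(\,\cdot\,)^2\}}$. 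Adding the three bounds yields the claimed inequality.

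The main obstacle I expect is the bookkeeping in the third term: handling differences of the non-Lipschitz $\bm{s}_\theta$ (resolved by the affine-intermediary trick) and, more delicately, verifying that the telescoped per-step gains reproduce \emph{exactly} the weights $\lambda(k,t)=\beta_k\Pi_{i=t+1}^{k-1}\sqrt{\Pi_{s=1}^i(1-\beta_s)}\big/\sqrt{1-\Pi_{s=1}^k(1-\beta_s)}$ and the constant $C_1$ rather than some looser product. A secondary subtlety is turning the almost-sure displacement bound from the coupling into a genuine TV bound, which relies on the Gaussian form of the last reverse transition and convexity of the Hellinger distance; the remainder is a faithful discretization of the VP-SDE argument behind \Cref{thm_main} and \Cref{lem}.
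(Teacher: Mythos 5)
Your proposal follows essentially the same route as the paper's proof: the same three-term triangle-inequality decomposition, the same Gaussian Hellinger computation giving the $\sqrt{2-2\exp\{\cdot\}}$ terms with $C_2=1/(8\sigma_t^2)$, and the same unrolled recursion on the reverse-process mean differences with the true-noise map $\bm{\epsilon}(\cdot,k)$ as intermediary to charge residuals to $\sqrt{L_u}$ and $2M$, reproducing $C_1$ and $\lambda(k,t)$. Your added care in converting the displacement bound into a TV bound (conditioning on the coupled pair at step $t+1$ and using convexity of the squared Hellinger distance) is a reasonable tightening of a step the paper treats more informally, but it does not change the argument.
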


\begin{proof}
    For ease of notation, we use the notation: $\alpha_t := 1-\beta_t$ and $\Bar{\alpha}_t := \Pi_{s=1}^t \alpha_s$.
    From the DDPM sampling process \cite{ho2020denoising}, we know that:
    \begin{align}
        \rvx'_{t-1} \sim p'_t &:= \dfrac{1}{\sqrt{\Bar{\alpha}_t}}\left( \rvx'_t - \dfrac{1-\alpha_t}{\sqrt{1-\Bar{\alpha}_t} } \bm{s}_\theta(\rvx'_t,t)  \right) + \sigma_t \rvz \label{eq:rev_p} \\
        \Tilde{\rvx}'_{t-1} \sim q'_t &:= \dfrac{1}{\sqrt{\Bar{\alpha}_t}}\left( \Tilde{\rvx}'_t - \dfrac{1-\alpha_t}{\sqrt{1-\Bar{\alpha}_t}} \bm{s}_\theta(\Tilde{\rvx}'_t,t)  \right) + \sigma_t \rvz \label{eq:rev_q}
    \end{align}
    where $\sigma^2_t = \dfrac{1-\bar{\alpha}_{t-1}}{1-\Bar{\alpha}_t}\beta_t$.

    $\bm{\mu}_{t,q}$ and $\bm{\mu}_{t,p}$ represent the mean of the distribution $q'_t$ and $p'_t$, respectively.
    Then from \Cref{eq:rev_p,eq:rev_q}, we have:
    \begin{align}
        \bm{\mu}_{t,q} - \bm{\mu}_{t,p} &= \dfrac{1}{\sqrt{\Bar{\alpha}_t}} (\bm{\mu}_{t-1,q} - \bm{\mu}_{t-1,p}) - \dfrac{1-\alpha_t}{\sqrt{\Bar{\alpha}_t} \sqrt{1-\Bar{\alpha}_t}} (\bm{s}_\theta(\Tilde{\rvx}'_t,t)-\bm{s}_\theta(\rvx'_t,t))
        \label{eq:rec}
    \end{align}
    Applying \Cref{eq:rec} iteratively, we get:
    \begin{align}
        \bm{\mu}_{T,q} - \bm{\mu}_{T,p} &= \dfrac{1}{\Pi_{s=t}^{T} \sqrt{\Bar{\alpha}_s}}(\bm{\mu}_{t-1,q} - \bm{\mu}_{t-1,p}) - \sum_{k=t}^{T} \dfrac{1-\alpha_k}{\sqrt{1-\Bar{\alpha}_k} \Pi_{i=k}^{T} \sqrt{\Bar{\alpha}_i}}  (\bm{s}_\theta(\Tilde{\rvx}'_k,k)-\bm{s}_\theta(\rvx'_k,k))
        \label{eq:rec_res}
    \end{align}

    On the other hand, $\bm{\mu}_{T,q} - \bm{\mu}_{T,p}$ can be formulated explicitly considering the Gaussian distribution at time step $T$ in the diffusion process:
    \begin{equation}
        \bm{\mu}_{T,q} - \bm{\mu}_{T,p} = \sqrt{\Bar{\alpha}_T} (\Tilde{\rvx}_0 - \rvx_0) = \sqrt{\Bar{\alpha}_T} \delta
        \label{eq:forw}
    \end{equation}

    Combining \Cref{eq:rec_res,eq:forw}, we can derive that:
{\scriptsize
    \begin{align}
        & \| \bm{\mu}_{T,q} - \bm{\mu}_{T,p}\|_2 \\ =& \left( \Pi_{s=t+1}^{T} \sqrt{\Bar{\alpha}_s} \right) \sqrt{\Bar{\alpha}_T} \| \delta \|_2 + \sum_{k=t+1}^{T} \dfrac{(1-\alpha_k)\Pi_{i=t+1}^{k-1} \sqrt{\Bar{\alpha}_i}}{\sqrt{1-\Bar{\alpha}_k} }  \|\bm{s}_\theta(\Tilde{\rvx}'_k,k)-\bm{s}_\theta(\rvx'_k,k))\|_2 \\
        \le& {\left( \Pi_{s=t+1}^{T} \sqrt{\Bar{\alpha}_s} \right) \sqrt{\Bar{\alpha}_T} \| \delta \|_2 + \hspace{-0.5em} \sum_{k=t+1}^{T} \hspace{-0.5em} \dfrac{(1-\alpha_k)\Pi_{i=t+1}^{k-1} \sqrt{\Bar{\alpha}_i}}{\sqrt{1-\Bar{\alpha}_k} } \left( \|\bm{s}_\theta(\Tilde{\rvx}'_k,k)-\epsilon(\rvx'_k,k)\|_2 + \|  \epsilon(\rvx'_k,k) - \bm{s}_\theta(\rvx'_k,k))\|_2 \right) }\\
        \le& \left( \Pi_{s=t+1}^{T} \sqrt{\Bar{\alpha}_s} \right) \sqrt{\Bar{\alpha}_T} \| \delta \|_2 + \sqrt{L_u} \hspace{-0.5em} \sum_{k=t+1}^{T} \hspace{-0.5em} \lambda(k,t)  + \hspace{-0.5em} \sum_{k=t+1}^{T} \hspace{-0.5em} \lambda(k,t) 
\left( \|\bm{s}_\theta(\Tilde{\rvx}'_k,k)-\epsilon(\Tilde{\rvx}'_k,k)\|_2 + \| \epsilon(\Tilde{\rvx}'_k,k) -\epsilon(\rvx'_k,k) \|_2 \right) \\
        \le& \left( \Pi_{s=t+1}^{T} \sqrt{\Bar{\alpha}_s} \right) \sqrt{\Bar{\alpha}_T} \| \delta \|_2 + ( \sqrt{L_u} + 2M ) \sum_{k=t+1}^{T} \lambda(k,t)  + \sum_{k=t+1}^{T} \lambda(k,t) 
\|\bm{s}_\theta(\Tilde{\rvx}'_k,k)-\epsilon(\Tilde{\rvx}'_k,k)\|_2  
    \end{align}
    }
    where $\lambda(k,t)=\dfrac{(1-\alpha_k)\Pi_{i=t+1}^{k-1} \sqrt{\Bar{\alpha}_i}}{\sqrt{1-\Bar{\alpha}_k} }$.

    We then leverage the closed form formulation of the Hellinger distance between two Gaussian distributions \cite{devroye2018total} parameterized by $\mu_1,\Sigma_1,\mu_2,\Sigma_2$:
\begin{equation}
\small
    H^2(\mathcal{N}(\mu_1,\Sigma_1),\mathcal{N}(\mu_2,\Sigma_2)) = 1 - \dfrac{det(\Sigma_1)^{1/4} det(\Sigma_2)^{1/4}}{det\left(\dfrac{\Sigma_1+\Sigma_2}{2}\right)^{1/2}} \exp\{-\dfrac{1}{8}(\mu_1-\mu_2)^T \left( \dfrac{\Sigma_1 + \Sigma_2}{2}\right)^{-1} (\mu_1-\mu_2) \}
\end{equation}    
Applying it to distribution $p'_t$ and $q'_t$, we have:
{\footnotesize
\begin{align}
    H^2(p'_t,q'_t) = 1 - \exp \{-\dfrac{1-\Bar{\alpha}_t}{8(1-\Bar{\alpha}_{t-1})\beta_t} \| \bm{\mu}_{t,q} - \bm{\mu}_{t,p} \|_2^2\} \\
    \le 1 - \exp\{-C_2 \left( C_1 \| \delta \|_2 + ( \sqrt{L_u} + 2M ) \sum_{k=t+1}^{T} \lambda(k,t) + \sum_{k=t+1}^{T} \lambda(k,t) 
\|\bm{s}_\theta(\Tilde{\rvx}'_k,k)-\epsilon(\Tilde{\rvx}'_k,k)\|_2  \right)^2  \}
\end{align}}
where $C_1 = \left( \Pi_{s=t+1}^{T} \sqrt{\Bar{\alpha}_s} \right) \sqrt{\Bar{\alpha}_T}$ and $C_2=\dfrac{1-\Bar{\alpha}_t}{8(1-\Bar{\alpha}_{t-1})\beta_t}$.
Finally, it follows that:
{\scriptsize
\begin{align}
    D_{TV}(q_t,q'_t) \le& D_{TV}(q_t, p_t) + D_{TV}(p_t,p'_t) + D_{TV}(q'_t,p'_t) \\
    \le& \sqrt{2-2\exp\{ -\dfrac{1}{8} (1-\Bar{\alpha}_t)\|\delta \|_2^2 \}} + \epsilon_{re} +  \sqrt{2}H(q'_t,p'_t) \\
    \le& \sqrt{2 - 2\exp\{-C_2 \left( C_1 \| \delta \|_2 + ( \sqrt{L_u} + 2M ) \sum_{k=t+1}^{T} \lambda(k,t) + \sum_{k=t+1}^{T} \lambda(k,t) 
\|\bm{s}_\theta(\Tilde{\rvx}'_k,k)-\epsilon(\Tilde{\rvx}'_k,k)\|_2  \right)^2  \}} \\
    &+ \sqrt{2-2\exp\{ -\dfrac{1}{8} (1-\Bar{\alpha}_t)\|\delta \|_2^2 \}} + \epsilon_{re} \\
    =& \sqrt{2 - 2\exp\{-C_2 \left( \sum_{k=t+1}^{T} \lambda(k,t) 
\|\bm{s}_\theta(\Tilde{\rvx}'_k,k)-\epsilon(\Tilde{\rvx}'_k,k)\|_2 + C_1 \| \delta \|_2 + ( \sqrt{L_u} + 2M ) \sum_{k=t+1}^{T} \lambda(k,t)   \right)^2  \}} \\
    &+ \sqrt{2-2\exp\{ -\dfrac{1}{8} (1- \Pi_{s=1}^t (1-\beta_s))\|\delta \|_2^2 \}} + \epsilon_{re}
\end{align}}
where $C_1 = \left( \Pi_{s=t+1}^{T} \sqrt{\Pi_{k=1}^s (1-\beta_k)} \right) \sqrt{\Pi_{s=1}^T (1-\beta_s)}$, $C_2=\dfrac{1-\Pi_{s=1}^t (1-\beta_s)}{8(1-\Pi_{s=1}^{t-1} (1-\beta_s))\beta_t}$, and $\lambda(k,t)=\dfrac{\beta_k\Pi_{i=t+1}^{k-1} \sqrt{\Pi_{s=1}^i(1-\beta_s)}}{\sqrt{1-\Pi_{s=1}^k(1-\beta_s)} }$.

\end{proof}

\section{Experiment}
\subsection{Pseudo-code of \sysname}
\label{sec:pseudocode}

Given an input pair $(\rvx,y)$ and the perturbation budget, we notate $\gL := \gL_{cls} + \lambda \gL_{dev}$ (\Cref{eq:loss_inter}) the surrogate loss, $\Pi$ the projection operator given the perturbation budget and distance metric, $\eta$ the step size, $\alpha$ the momentum coefficient, $N_{\text{iter}}$ the number of iterations, and $W$ the set of checkpoint iterations.
Concretely, we select $\gL_{cls}$ as the cross-entropy loss in the first round and DLR loss in the second round following \cite{nie2022DiffPure}.
The gradient of the surrogate loss with respect to the samples is computed by forwarding the samples and backwarding the gradients for multiple times and taking the average to tackle the problem of randomness. We also optimize all the samples, including the misclassified ones, to push them away from the decision boundary.
The gradient can be computed with our segment-wise forwarding-backwarding algorithm in \Cref{sec:mem_eff}, which enables \sysname to be the \textit{first} fully adaptive attack against the DDPM-based purification defense.
The complete procedure is provided in \Cref{alg:ourattack}.

\begin{algorithm}[tb]
    \caption{\sysname}\label{alg:ourattack}
    \begin{algorithmic}[1]
        \STATE {\bfseries Input:} $\gL := \gL_{cls} + \lambda \gL_{dev}$, $\Pi$, $(\rvx,y)$, $\eta$, $\alpha$, \niter, $W=\{w_0, \ldots, w_n\}$ 
        \STATE {\bfseries Output:} $\Tilde{\rvx}$
        \STATE $\Tilde{\rvx}^{(0)} \gets \Tilde{\rvx}$
        \STATE $\iter{\Tilde{\rvx}}{1} \gets \Pi \left(\Tilde{\rvx}^{(0)} + \eta \nabla_{\Tilde{\rvx}^{(0)}} \gL(\Tilde{\rvx}^{(0)}, y)\right)$
        \STATE $L_\textrm{max}\gets \max\{\gL(\Tilde{\rvx}^{(0)}, y), \gL(\Tilde{\rvx}^{(1)}, y)\}$
        \STATE $\Tilde{\rvx} \gets \Tilde{\rvx}^{(0)}$ \textbf{if} $L_\textrm{max}\equiv \gL(\Tilde{\rvx}^{(0)}, y)$ \textbf{else} $\Tilde{\rvx} \gets \Tilde{\rvx}^{(1)}$ %\argmax\{f(\iter{x}{0}), f(\iter{x}{1})\}$
        \FOR{$k=1$ {\bfseries to} \niter $- 1$}
        \STATE $\iter{\rvz}{k+1} \gets \Pi \left(\Tilde{\rvx}^{(k)} + \eta \nabla_{\Tilde{\rvx}^{(k)}} \gL(\Tilde{\rvx}^{(k)}, y)\right)$
        \STATE $\begin{aligned}[t]\iter{\Tilde{\rvx}}{k+1} \gets  \Pi &\left(\iter{\Tilde{\rvx}}{k} + \alpha(\iter{\rvz}{k+1}- \iter{\Tilde{\rvx}}{k})  + (1-\alpha) (\iter{\Tilde{\rvx}}{k} - \iter{\Tilde{\rvx}}{k-1})\right)\end{aligned}$
        \IF{$\gL(\Tilde{\rvx}^{(k+1)}, y) > L_\textrm{max}$}
        \STATE $\Tilde{\rvx} \gets \iter{\Tilde{\rvx}}{k+1}$ and $L_\textrm{max}\gets \gL(\Tilde{\rvx}^{(k+1)}, y)$
        \ENDIF
        \IF{$k\in W$}
        \STATE $\eta \gets \eta / 2$
        \ENDIF
        \ENDFOR
    \end{algorithmic}
\end{algorithm}

\subsection{Experiment details}
\label{app:exp_details}

We use pretrained score-based diffusion models \cite{song2021scorebased} on CIFAR-10, guided diffusion models \cite{dhariwal2021diffusion} on ImageNet, and DDPM \cite{ho2020denoising} on CIFAR-10 to purify the images following the literature \cite{nie2022DiffPure,blau2022threat,wang2022guided,wu2022guided}. 
Due to the high computational cost, we follow \cite{nie2022DiffPure} to randomly select a fixed subset of 512 images sampled from the test set to evaluate the robust accuracy for fair comparisons.
We implement \sysname in the framework of AutoAttack \cite{croce2020reliable}, and we use the same hyperparameters.
Specifically, the number of iterations of attacks ($N_{\text{iter}}$) is $100$, and the number of iterations to approximate the gradients (EOT) is $20$. The momentum coefficient $\alpha$ is $0.75$, and the step size $\eta$ is initialized with $2\epsilon$ where $\epsilon$ is the maximum $\ell_p$-norm of the perturbations.
The balance factor $\lambda$ between the classification-guided loss and the \lossname in \Cref{eq:loss_inter} is fixed as $1.0$ and $\alpha(\cdot)$ is set the reciprocal of the size of sampled time steps in the evaluation.
We consider $\epsilon=8/255$ and $\epsilon=4/255$ for $\ell_\infty$ attack and $\epsilon=0.5$ for $\ell_2$ attack following the literature \cite{croce2020robustbench,croce2020reliable}.

We use randomly selected $3$ seeds and report the averaged results for evaluations.
CIFAR-10 is under the MIT license and ImageNet is under the BSD 3-clause license.

\textbf{More details of baselines.} In this part, we illustrate more details of the baselines {1) SOTA attacks against score-based diffusion \textit{adjoint attack} (AdjAttack) \cite{nie2022DiffPure}, 2) SOTA attack against DDPM-based diffusion \textit{Diff-BPDA attack} \cite{blau2022threat}, 3) SOTA black-box attack \textit{SPSA} \cite{pmlr-v80-uesato18a} and \textit{square attack} \cite{andriushchenko2020square}, and 4) specific attack against EBM-based purification \textit{joint attack} (score/full) \cite{yoon2021adversarial}.
AdjAttack selects the surrogate loss $\gL$ as the cross-entropy loss and DLR loss and leverages the adjoint method \cite{li2020scalable} to efficiently backpropagate the gradients through SDE. The basic idea is to obtain the gradients via solving an augmented SDE. For the SDE in \Cref{eq:diffusion_rev}, the augmented SDE that computes the gradients $\partial \gL / \partial \rvx'_T$ of back=propagating through it is given by:
\begin{align}
\label{aug_sdeint}
    \renewcommand\arraystretch{1.5}
    \! \begin{pmatrix}
    {\rvx'_T} \\
    \frac{\partial \mathcal{L}}{\partial {\rvx'_T}} 
    \end{pmatrix}
    \! = \! \texttt{sdeint} \! \left( 
    \! \begin{pmatrix}
    {\rvx'_0} \\ 
    \frac{\partial \mathcal{L}}{\partial {\rvx'_0}} 
    \end{pmatrix}\!,
    \tilde{\rvf}, \tilde{\rvg}, \tilde{\rvw}, 0, T
    \right)
\end{align}
where $\frac{\partial \mathcal{L}}{\partial {\rvx'_0}}$ is the gradient of the objective $\mathcal{L}$ w.r.t. the ${\rvx'_0}$, and
		\begin{align}
		\begin{split}
		\tilde{\rvf}([\rvx; \rvz], t) &=
		\renewcommand\arraystretch{1.5}
		\begin{pmatrix}
		\rvf_{\text{rev}}(\rvx, t) \\
		{\frac{\partial \rvf_{\text{rev}}(\rvx, t) }{\partial \rvx}} \rvz
		\end{pmatrix}  \\
		\renewcommand\arraystretch{1.5}
		\tilde{\rvg}(t) &=
		\begin{pmatrix}
		-\rvg_{\text{rev}}(t) \mathbf{1}_d \\
		\mathbf{0}_d
		\end{pmatrix} \\
		\renewcommand\arraystretch{1.5}
		\tilde{\rvw}(t) &=
		\begin{pmatrix}
		-\rvw(1-t) \\
		-\rvw(1-t)
		\end{pmatrix}
		\end{split}
		\end{align}
		with $\mathbf{1}_d$ and $\mathbf{0}_d$ representing the $d$-dimensional vectors of all ones and all zeros, respectively and $\rvf_{\text{rev}}(\rvx,t):=-\frac{1}{2}\beta(t)\rvx-\beta(t)\nabla_\rvx \log p_t(\rvx), \rvg_{\text{rev}}(t):=\sqrt{\beta(t)}$.

SPSA attack approximates the gradients by randomly sampling from a pre-defined distribution and using the finite-difference method. Square attack heuristically searches for adversarial examples in a low-dimensional space with the constraints of the perturbation pattern (i.e., constraining the square shape of the perturbation).
Joint attack (score) updates the input by the weighted average of the classifier gradient and the output of the score estimation network (i.e., the gradient of log-likelihood with respect to the input), while joint attack (full) leverages the classifier gradients and the difference between the input and the purified samples.
The update of the joint attack (score) is formulated as follows:
\begin{equation}
    \Tilde{\rvx} \gets \Tilde{\rvx} + \eta \left(\lambda' \mathtt{sign} (\bm{s}_{\theta}(\Tilde{\rvx})) +(1-\lambda')\mathtt{sign}(\nabla_{\Tilde{\rvx}} \mathcal{L}(F(P(\Tilde{\rvx})),y)\right)
\end{equation}
The update of the joint attack (full) is formulated as follows:
\begin{equation}
    \Tilde{\rvx} \gets \Tilde{\rvx} + \eta \left(\lambda' \mathtt{sign} (F(P(\Tilde{\rvx}))-\Tilde{\rvx}) +(1-\lambda')\mathtt{sign}(\nabla_{\Tilde{\rvx}} \mathcal{L}(F(P(\Tilde{\rvx})),y)\right)
\end{equation}
where $\eta$ is the step size and $\lambda'$ the balance factor fixed as $0.5$ in the evaluation.

\begin{table}[h]
    \centering
    \caption{ Comparisons of gradient backpropagation time per batch(second)/Memory cost (MB) between \cite{blau2022threat} and DiffAttack. We evaluate on CIFAR-10 with WideResNet-28-10 with batch size 16.}
    \begin{tabular}{c|cccccc}
    \toprule
    Method & $T=5$ &  $T=10$ & $T=15$ & $T=20$  & $T=30$ & $T=1000$ \\
    \midrule
    \cite{blau2022threat} & 0.45/14,491 &	0.83/23,735	& 1.25/32,905 & 1.80/38,771 & --- & ---	 \\
    DiffAttack & 0.44/2,773 & 0.85/2,731 &	1.26/2,805 & 1.82/2,819 & 2.67/2,884  & 85.81/3,941 \\
    \bottomrule
    \end{tabular}
    \label{tab:add_2}
\end{table}

\begin{table}[h]
    \centering
    \caption{The clean / robust accuracy (\%) of diffusion-based purification with different diffusion lengths $T$ under DiffAttack. The evaluation is done on ImageNet with ResNet-50 under $\ell_\infty$ attack $(\epsilon=4/255)$.}
    \begin{tabular}{cccc}
    \toprule
    $T=50$ &  $T=100$ & $T=150$ & $T=200$ \\
    \midrule
71.88 / 12.46	 & 68.75 / 24.62 & 67.79 / 28.13 & 65.62 / 26.83  \\ 
    \bottomrule
    \end{tabular}
    \label{tab:add_1}
\end{table}

\begin{table}[h]
    \centering
    \caption{Robust accuracy (\%) with $\ell_\infty$ attack ($\epsilon=8/255$) against score-based diffusion purification on CIFAR-10. The adversarial examples are optimized on the diffusion purification defense with pretrained ResNet-50 and evaluated on defenses with other types of models including WRN-50-2 and DeiT-S. }
    \label{tab:exp_add_trans}
    \begin{tabular}{c|ccc}
    \toprule
         & \textit{ResNet-50} & WRN-50-2 & DeiT-S \\
         \midrule
     AdjAttack &  40.93 & 52.37 & 54.53 \\
     DiffAttack & \textbf{28.13} & \textbf{37.28} & \textbf{39.62} \\
     \bottomrule
    \end{tabular}
    \label{tab:trans}
\end{table}

\begin{table}[t]
    \centering
    \caption{The impact of different loss weights $\lambda$ on the robust accuracy (\%). We perform $\ell_\infty$ ($\epsilon=8/255$) against score-based diffusion purification on CIFAR-10 with WideResNet-28-10 and diffusion length $T=0.1$.}
    \label{tab:add_exp1}
    \begin{tabular}{cccc}
    \toprule
      $\lambda=0.1$ & $\lambda=1.0$  & $\lambda=10.0$\\
      \midrule
      54.69 & $\textbf{46.88}$  & 53.12 \\
    \bottomrule
    \end{tabular}
    \label{tab:abl_lam}
\end{table}

% \begin{table}[h]
%     \centering
%     \caption{Comparisons of the memory cost (MB) of gradient backpropagation between \cite{blau2022threat} and DiffAttack. We evaluate on ImageNet with ResNet-50 with batch size 4 under $\ell_\infty$ attack $(\epsilon=4/255)$. }
%     \begin{tabular}{c|cccccc}
%     \toprule
%     Method & $T=5$ &  $T=10$ & $T=15$ & $T=20$  & $T=30$ & $T=1000$ \\
%     \midrule
    
%     \bottomrule
%     \end{tabular}
%     \label{tab:add_2}
% \end{table}

\subsection{Additional Experiment Results}
\label{app:add}

\textbf{Efficiency evaluation}. We evaluate the \textbf{wall clock time} per gradient backpropagation of the segment-wise forwarding-backwarding algorithm for different diffusion lengths and compare the time efficiency as well as the memory costs with the standard gradient backpropagation in previous attacks \cite{blau2022threat}. The results in \Cref{tab:add_2} indicate that the segment-wise forwarding-backwarding algorithm consumes comparable wall clock time per gradient backpropagation compared with \cite{blau2022threat} and achieves a much better tradeoff between time efficiency and memory efficiency. The evaluation is done on an RTX A6000 GPU with 49,140 MB memory. In the segment-wise forwarding-backwarding algorithm, we require one forward pass and one backpropagation pass in total for the gradient computation, while the standard gradient backpropagation in \cite{blau2022threat} requires one backpropagation pass. However, since the backpropagation pass is much more expensive than the forward pass \cite{paszke2019pytorch}, our segment-wise forwarding-backwarding algorithm can achieve comparable time efficiency while significantly reducing memory costs.

\textbf{More ablation studies on ImageNet}. We conduct more evaluations on ImageNet to consolidate the findings in CIFAR-10. We evaluate the clean/robust accuracy (\%) of diffusion-based purification with different diffusion lengths $T$ under DiffAttack.
The results in \Cref{tab:add_1} indicate that 1) the clean accuracy of the purification defenses negatively correlates with the diffusion lengths, and 2) a moderate diffusion length benefits the robust accuracy under DiffAttack.

{
\textbf{Tansferability of \sysname}. ACA \cite{chen2023content} and Diff-PGD attack \cite{xue2023diffusion} explore the transferability of unrestricted adversarial attack, which generates realistic adversarial examples to fool the classifier and maintain the photorealism. They demonstrate that this kind of semantic attack transfers well to other models. To explore the transferability of adversarial examples by $\ell_p$-norm-based DiffAttack, we evaluate the adversarial examples generated on score-based purification withResNet-50 on defenses with pretrained WRN-50-2 and DeiT-S. The results in \Cref{tab:trans} indicate that DiffAttack also transfers better than AdjAttack and achieves much lower robust accuracy on other models.
}

{
\textbf{Ablation study of balance factor $\lambda$}. As shown in \Cref{eq:diffattack_obj}, $\lambda$ controls the balance of the two objectives. A small $\lambda$ can weaken the deviated-reconstruction object and make the attack suffer more from the vanishing/exploded gradient problem, while a large $\lambda$ can downplay the guidance of the classification loss and confuse the direction towards the decision boundary of the classifier.
The results in \Cref{tab:abl_lam} show that selecting $\lambda$ as $1.0$ achieves better tradeoffs empirically, so we fix it as $1.0$ for experiments.
}

\subsection{Visualization}
\label{app:vis}

\begin{figure}[th]
    \centering
    \includegraphics[width=1.0\linewidth]{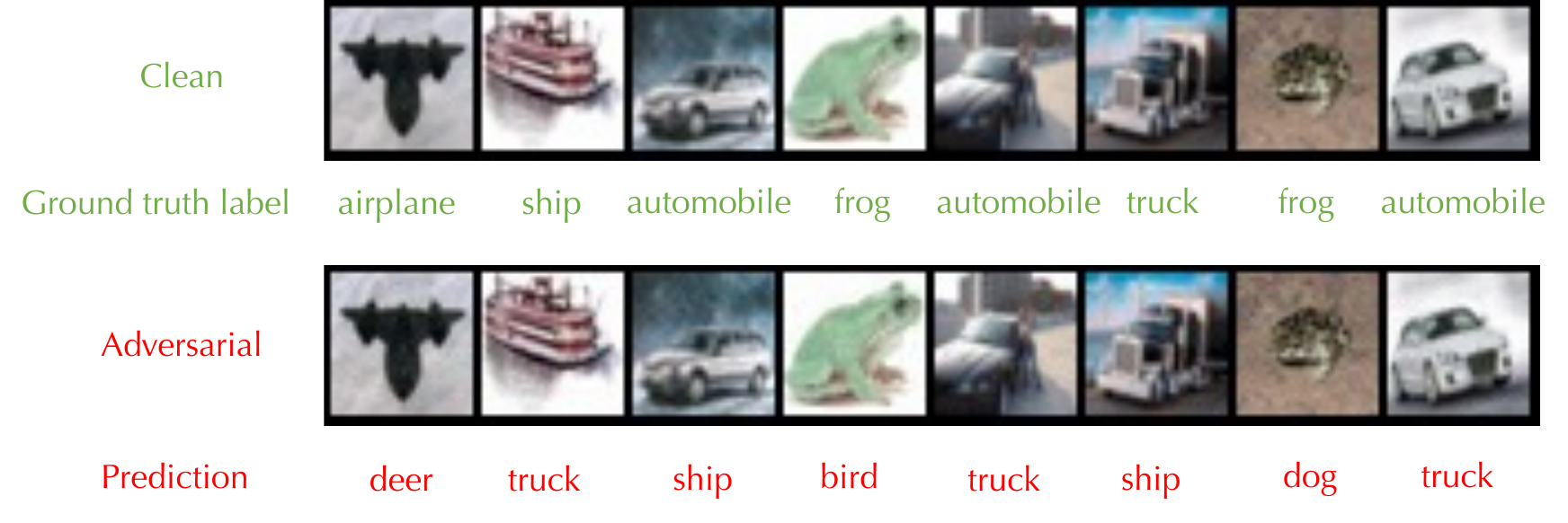}
    \caption{Visualization of the clean images and adversarial samples generated by DiffAttack on CIFAR-10 with $\ell_\infty$ attack ($\epsilon=8/255$) against score-based purification with WideResNet-28-10. }
    \label{fig:vis_cifar10}
\end{figure}

\begin{figure}[th]
    \centering
    \includegraphics[width=1.0\linewidth]{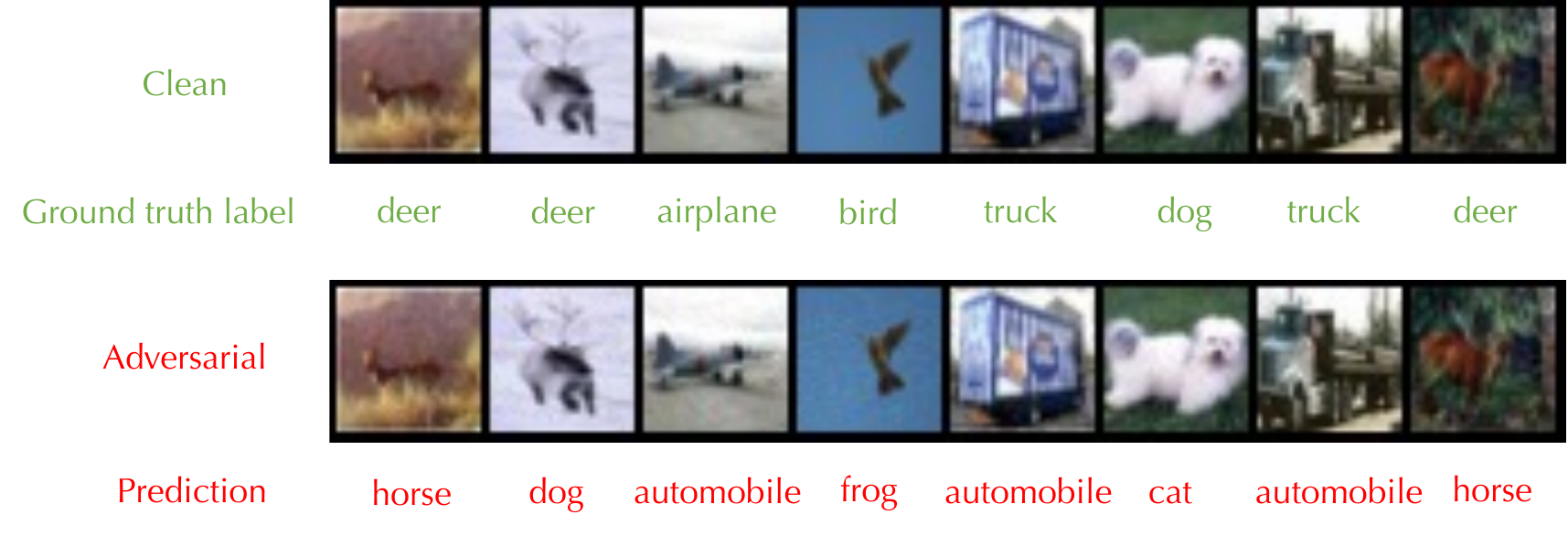}
    \caption{Visualization of the clean images and adversarial samples generated by DiffAttack on CIFAR-10 with $\ell_\infty$ attack ($\epsilon=8/255$) against score-based purification with WideResNet-70-16. }
    \label{fig:vis_cifar10_2}
\end{figure}

% \begin{figure}[t]
%     \centering
%     \includegraphics[width=0.85\linewidth]{}
%     \caption{ Visualization of the clean images, adversarial samples generated by DiffAttack before and after diffusion purification on CIFAR-10 with $\ell_\infty$ attack ($\epsilon=8/255$) against score-based purification with WideResNet-28-10. }
%     \label{fig:vis_add_cifar10}
% \end{figure}

% \begin{figure}[th]
%     \centering
%     \includegraphics[width=0.75\linewidth]{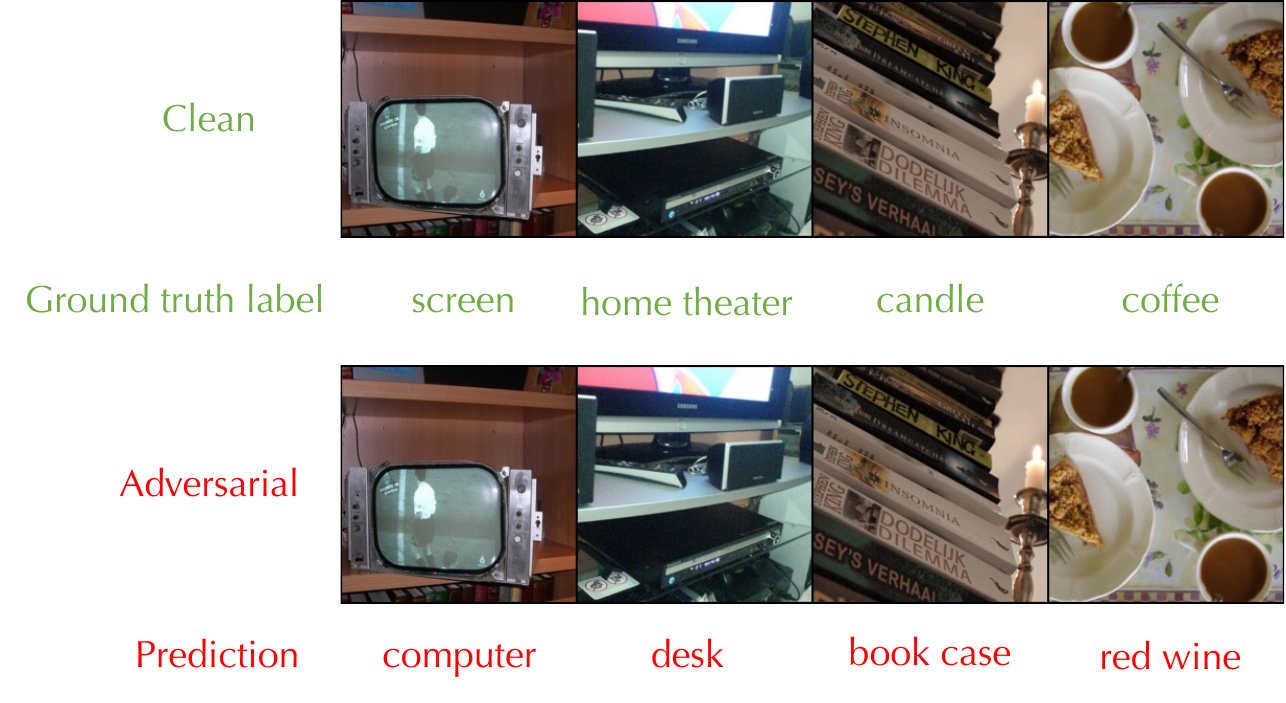}
%     \caption{Visualization of the clean images and adversarial samples generated by DiffAttack on ImageNet with $\ell_\infty$ attack ($\epsilon=4/255$) against score-based purification with ResNet-50.}
%     \label{fig:vis_imagenet}
% \end{figure}

\begin{figure}[th]
    \centering
    \includegraphics[width=0.75\linewidth]{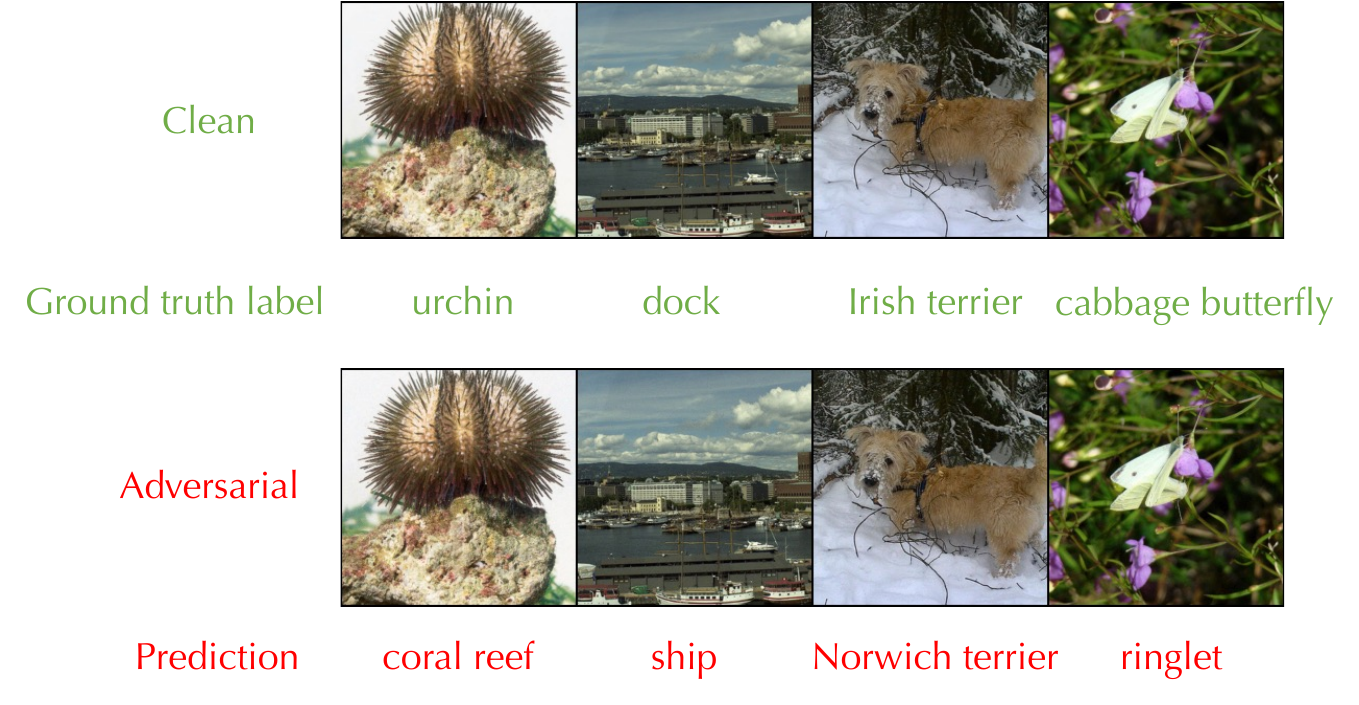}
    \caption{Visualization of the clean images and adversarial samples generated by DiffAttack on ImageNet with $\ell_\infty$ attack ($\epsilon=4/255$) against score-based purification with WideResNet-50-2.}
    \label{fig:vis_imagenet_2}
\end{figure}

\begin{figure}[th]
    \centering
    \includegraphics[width=0.75\linewidth]{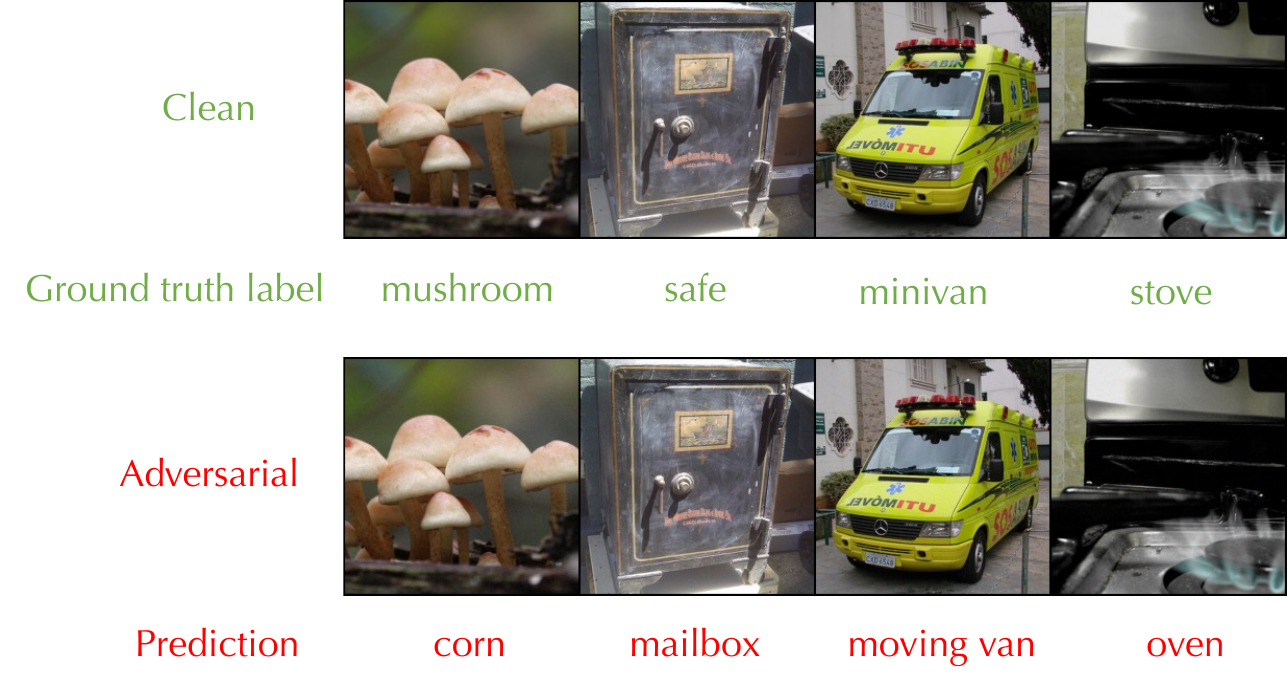}
    \caption{Visualization of the clean images and adversarial samples generated by DiffAttack on ImageNet with $\ell_\infty$ attack ($\epsilon=4/255$) against score-based purification with DeiT-S.}
    \label{fig:vis_imagenet_3}
\end{figure}

In this section, we provide the visualization of adversarial examples generated by DiffAttack. Based on the visualization on CIFAR-10 and ImageNet with different network architectures, we conclude that the perturbation generated by DiffAttack is stealthy and imperceptible to human eyes and hard to be utilized by defenses.

\begin{figure}[t]
    \centering
    \includegraphics[width=0.75\linewidth]{}
    \caption{ Visualization of the clean images and adversarial samples generated by DiffAttack on ImageNet with a larger perturbation radius: $\ell_\infty$ attack ($\epsilon=8/255$) against score-based purification with ResNet-50.}
    \label{fig:vis_add_imagenet}
\end{figure}

\end{document}